\newtheorem{thm}{Theorem}
\newtheorem{prop}[thm]{Proposition}
\newtheorem{lem}[thm]{Lemma}
\theoremstyle{plain}
\newtheorem*{defn}{Definition}
\newtheorem*{nthm}{Theorem}
\newcommand{\R}{\mathbb{R}}
\newcommand{\pr}{\textrm{Pr}}
\newcommand{\sgn}{\textrm{sgn}}
\newcommand{\E}{\textrm{E}}
\newcommand{\dotp}[2]{\left\langle #1,#2\right\rangle}
\title{$k$-Independent Gaussians Fool Polynomial Threshold Functions}
\author{Daniel M. Kane}
\begin{document}

\maketitle

\section{Introduction}

In this paper we consider the ability of limited independence to fool polynomial threshold functions (PTFs).  We recall that a (degree-$d$) polynomial threshold function is a function of the form $f(x) = \sgn(p(x))$ for some $n$-dimensional polynomial $p$ of degree at most $d$.  There has been recent interest in polynomial threshold functions in several areas of computer science.  This paper expands on previous work in derandomizing polynomial threshold functions using limited independence.

We say that a random variables $X$ fools a family of functions with respect to some distribution $Y$ if for every function, $f$, in the family
$$
|\E[f(X)]-\E[f(Y)]| = O(\epsilon).
$$
In this paper we will be interested in the case where the family is of all degree-$d$ polynomial threshold functions in $n$-variables, and $Y$ is either an $n$-dimension Gaussian distribution, and in particular the case where $X$ is an arbitrary family of $k$-independent Gaussian random variables.  In particular, we prove that

\begin{thm}\label{mainTheorem}
Let $d>0$ be an integer and $\epsilon>0$ a real number, then there exists a $k=O_d\left( \epsilon^{-2^{O(d)}}\right)$, so that for any degree $d$ polynomial $p$ and any $k$-independent family of Gaussians $X$ and fully independent family of Gaussians $Y$
$$
\left| \E[\sgn(p(X))]-\E[\sgn(p(Y))]\right| = O(\epsilon).
$$
\end{thm}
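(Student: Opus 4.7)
The plan is a smoothing-then-polynomial-approximation argument, exploiting the fact that $k$-independence exactly fools polynomials of total degree at most $k$. After rescaling $p$ so that $\E[p(Y)^2]=1$, I would first replace the discontinuous $\sgn$ by a smooth surrogate $\psi_c\colon\R\to[-1,1]$ that agrees with $\sgn$ outside the window $[-c,c]$ and has controlled derivatives. By the Carbery--Wright anti-concentration inequality, $\pr[|p(Y)|\le c]=O(d\,c^{1/d})$, so this incurs an additive error of $O(\epsilon)$ once $c$ is chosen of size roughly $\epsilon^{d}$.

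Next I would expand $\psi_c\circ p$ in Hermite polynomials and truncate at some degree $k$ to produce a polynomial $q$ of total degree at most $k$ with $\|\psi_c\circ p-q\|_{L^2(Y)}=O(\epsilon)$; the degree $k$ needed grows polynomially in $1/c$, with an exponent depending on $d$ coming from how a degree-$d$ composition distributes Hermite weight. Because $X$ is $k$-independent, $\E[q(X)]=\E[q(Y)]$, and Cauchy--Schwarz combined with the $L^2$ bound gives $|\E[\psi_c(p(Y))]-\E[q(Y)]|=O(\epsilon)$. At this point I have controlled the distance $|\E[\sgn(p(Y))]-\E[q(X)]|=O(\epsilon)$. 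The main obstacle is the return trip: to bound $|\E[q(X)]-\E[\sgn(p(X))]|$, I need the smoothing and $L^2$-approximation estimates with respect to $X$ rather than $Y$. The $L^2$-approximation error $\|\psi_c\circ p-q\|_{L^2(X)}$ matches its $Y$-counterpart as soon as $X$ is independent enough to fool the relevant degree-$2\max(k,d)$ polynomial $(\psi_c\circ p-q)^2$--but since $\psi_c$ is not a polynomial, this step actually requires a second approximation of $(\psi_c\circ p)^2$ by a polynomial, which again costs a power of $1/c$ in the degree.

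The most delicate piece is transferring anti-concentration from $Y$ to $X$, i.e., showing $\pr[|p(X)|\le c]=O(\epsilon)$. The plan is to bootstrap: approximate the indicator of $[-c,c]$ first by a smooth bump and then by a polynomial of degree $k'$ in $p(x)$, use $k'd$-independence to move the expectation from $X$ to $Y$, and finish with Carbery--Wright on the $Y$ side. This auxiliary approximation introduces exactly the same polynomial-in-$1/c$ degree cost as the primary approximation, and the two cascades compound. Tracking the parameters--$c$ of order $\epsilon^{O(d)}$, primary approximation of degree a power of $1/c$ that depends on $d$, and an auxiliary anti-concentration step of the same form--one sees that each ``layer'' in the recursion raises the exponent of $1/\epsilon$ by a multiplicative factor depending on $d$. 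Composing $O(d)$ such layers produces the doubly-exponential-in-$d$ final bound $k=O_d(\epsilon^{-2^{O(d)}})$ asserted in the theorem; this compounding, together with the lack of a direct anti-concentration inequality for $X$, is what I expect to be the technical heart of the argument.
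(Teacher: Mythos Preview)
Your outline diverges substantially from the paper's argument, and the place where it breaks is precisely the ``return trip'' you flag as delicate.

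The paper does \emph{not} smooth $\sgn$ on $\R$ and then Hermite-truncate $\psi_c\circ p$. Instead it first reduces to multilinear $p$ (Section~\ref{ReductionSec}), then proves a structure theorem (Proposition~\ref{structureProp}) writing $p(X)=h(P_1(X),\ldots,P_N(X))$ where the $P_{i,j}$ are lower-degree polynomials with controlled moments, organized into $d$ blocks. It then mollifies the outer function $\sgn\circ h$ on $\R^N$ with \emph{different} smoothing scales $C_1,\ldots,C_d$ on the different blocks, and Taylor-expands the mollified function to different orders $m_1,\ldots,m_d$ in the different blocks. The key payoff is Proposition~\ref{taylorErrorProp}: the Taylor remainder is bounded \emph{pointwise} by an explicit polynomial in the $|P_i|$, namely $\prod_i\bigl(1+C_i^{m_i}|P_i|^{m_i}/m_i!\bigr)-1$. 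Because this upper bound is itself a polynomial of degree $\le k$, its expectation under $X$ equals its expectation under $Y$, and the moment bounds on the $P_{i,j}$ (which come from Lata{\l}a's theorem via the structure decomposition) make that expectation small. The recursion that produces the $2^{O(d)}$ exponent lives entirely inside the structure theorem: each step peels off factors of strictly smaller degree, so it terminates in depth $d$.

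Your plan, by contrast, uses $L^2$ (Hermite) truncation, and this is where the gap is. An $L^2$ bound $\|\psi_c\circ p - q\|_{L^2(Y)}\le\epsilon$ gives no pointwise or one-sided control, so to bound $\E[(\psi_c\circ p - q)(X)]$ or $\pr[|p(X)|\le c]$ you are forced into exactly the regress you describe: approximate the error (or the bump) by another polynomial, then bound \emph{that} approximation's error under $X$, and so on. You assert this bottoms out in $O(d)$ layers, but nothing in your setup makes successive layers simpler---each is again ``smooth function of a degree-$d$ polynomial, approximate under $X$''---so there is no termination mechanism. The paper sidesteps this entirely: the pointwise polynomial bound on the Taylor error removes the need for any recursion on the $X$ side, and anti-concentration for $X$ is handled by a one-step sandwiching argument (end of Section~\ref{AntiConSeq}) that bounds $\E[\sgn(p(X))]$ between $\E[\sgn(p(Y)\pm c)]$ and then invokes Carbery--Wright only for $Y$. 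If you want to salvage your approach, you would need to replace $L^2$ truncation by a polynomial that \emph{pointwise dominates} the error; that is essentially what the structure theorem plus Taylor expansion buys.
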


There has been a significant amount of recent work on the problem of fooling low degree polynomial threshold functions of Gaussian or Bernoulli random variables, especially via limited independence.  It was shown in \cite{DGJSV} that $\tilde O(\epsilon^{-2})$-independence is sufficient to fool degree-1 polynomial threshold functions of Bernoulli random variables, and show that this is tight up to polylogarithmic factors.  In \cite{DKN} it was shown that $\tilde O(\epsilon^{-9})$-independence sufficed for degree-2 polynomial threshold functions of Bernoullis and that $O(\epsilon^{-2})$ and $O(\epsilon^{-8})$ suffices for degree 1 and 2 polynomial threshold functions of Gaussians.  The degree 1 case was also extended by \cite{BLY}, who show that limited independence fools threshold functions of polynomials that can be written in terms of a small number of linear polynomials. Finally, in \cite{MZ} a more complicated pseudorandom generator for degree-$d$ polynomial threshold functions of Bernoulli variables is developed with seed length $2^{O(d)}\log(n) \epsilon^{-8d-3}$.  As far as we are aware, our paper is the first result to show that degree-$d$ polynomial threshold functions are fooled by $k$-independence for any $k$ depending only on $\epsilon$ and $d$ for any $d\geq 3$.

\section{Overview}

We prove Theorem \ref{mainTheorem} first by proving our result for multilinear polynomials, and then finding a reduction to the general case.  In particular we prove
\begin{prop}\label{mainProp}
Let $d>0$ be an integer and $\epsilon>0$ a real number, then there exists a $k=O_d\left( \epsilon^{-2^{O(d)}}\right)$, so that for any degree $d$ multilinear polynomial $p:\R^n\rightarrow \R$ and any $k$-independent family of Gaussians $X$ and fully independent family of Gaussians $Y$
$$
\left| \E[\sgn(p(X))]-\E[\sgn(p(Y))]\right| = O(\epsilon).
$$
\end{prop}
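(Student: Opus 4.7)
The plan is a mollify-and-match-moments argument, in the spirit of earlier PTF-fooling papers. First I would normalize $p$ so that $\E[p(Y)^2]=1$; this quantity is a polynomial of degree $2d$ in the coordinates, so provided $k\geq 2d$ it is preserved by $k$-independence, and $p(X)$ automatically inherits the same normalization. For the remainder of the argument $p(X)$ and $p(Y)$ can then be treated as having matching low-degree moments up to order $\lfloor k/d\rfloor$.

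\textbf{Anti-concentration on both sides.} The Carbery--Wright inequality gives $\pr[|p(Y)-t|\leq\delta]=O(d\delta^{1/d})$ for every $t\in\R$. To obtain an analogous bound for $p(X)$, note that for each integer $q\leq k/(2d)$ the moment $\E[p(X)^{2q}]$ agrees exactly with $\E[p(Y)^{2q}]$, so the Gaussian hypercontractive estimate $\|p(Y)\|_{2q}\leq(2q-1)^{d/2}\|p(Y)\|_2$ transfers verbatim to $p(X)$ in the regime $2q\leq k/d$. A standard Paley--Zygmund-style deduction of anti-concentration from moment bounds then yields a Carbery--Wright-type estimate for $p(X)$ with comparable parameters.

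\textbf{Mollification and polynomial replacement of $\sgn$.} Replace $\sgn$ by a smooth approximation $S_\delta$ that agrees with $\sgn$ outside $[-\delta,\delta]$ and whose $\ell$-th derivative is $O(\delta^{-\ell})$. Anti-concentration on both sides gives $|\E[\sgn(p(Z))]-\E[S_\delta(p(Z))]|=O(d\delta^{1/d})$ for $Z\in\{X,Y\}$. I would then approximate $S_\delta$ on a window $[-M,M]$ by a univariate polynomial $T$ of degree $L$ (e.g.\ via a truncated Hermite or Chebyshev expansion). The composition $T\circ p$ has total degree $Ld$, so $k\geq Ld$ forces $\E[T(p(X))]=\E[T(p(Y))]$. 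The residual $|\E[(S_\delta-T)(p(Z))]|$ is controlled by Cauchy--Schwarz together with the hypercontractive control on $p(Z)$: the contribution from $|p(Z)|>M$ is negligible by the tail bound, while on $|p(Z)|\leq M$ the $L^\infty$ error is made $O(\epsilon)$ by choosing $L$ polynomially large in $M/\delta$.

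\textbf{Where the work lies.} The main obstacle is calibrating $\delta$, $M$, and $L$ coherently so that all error terms are $O(\epsilon)$ while $k=Ld$ remains within the claimed bound. Pushing the mollification error below $\epsilon$ forces $\delta=(\epsilon/d)^{\Theta(d)}$; resolving the resulting width-$\delta$ transition requires $L$ polynomial in $1/\delta$; and the hypercontractive tail control needed to make Cauchy--Schwarz sharp enough in turn fixes how large a moment one must transfer from $Y$ to $X$, which constrains the usable $q$. Carrying out these choices cleanly, and in particular passing anti-concentration through limited independence with quantitatively acceptable parameters, is where I expect the argument to become genuinely technical rather than routine; the doubly-exponential dependence $\epsilon^{-2^{O(d)}}$ in the statement suggests that the compounding of these parameter dependencies cannot be substantially avoided.
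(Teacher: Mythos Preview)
Your plan has a genuine gap that is exactly the obstacle the paper is built around. The univariate mollify-and-match-moments scheme you describe works for $d\le 2$ (and is essentially what \cite{DKN} does), but breaks for $d\ge 3$ because the moments of $p$ grow too fast. Concretely: by hypercontractivity $\E[|p(Y)|^L]$ is of order $L^{dL/2}$. Whatever degree-$L$ polynomial $T$ you use to approximate $S_\delta$ on $[-M,M]$, you cannot prevent $|T(t)|$ from blowing up like $(|t|/M)^L$ outside the window, so the tail contribution to $\E[|T(p(Z))|\,\mathbf 1_{|p(Z)|>M}]$ is controlled only if $M\gtrsim L^{d/2}$. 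On the other hand, resolving the width-$\delta$ transition forces $L\gtrsim M/\delta$. Combining these gives $L\gtrsim L^{d/2}/\delta$, which for $d\ge 3$ has no solution with $L$ large. The same arithmetic kills the Taylor-remainder version: the Taylor error after degree $L$ is $\sim (|p|/\delta)^L/L!$, whose expectation is $\bigl(L^{d/2-1}/\delta\bigr)^L$ up to constants, and this diverges as $L\to\infty$ once $d\ge 3$.

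The paper's fix is the structure theorem (Proposition~\ref{structureProp}): one writes $p(Y)=\sum_i h_i(P_{i,1}(Y),\ldots,P_{i,n_i}(Y))$ where each $P_{i,j}$ has $k$-th moment $O_d(\sqrt k)^k$ rather than $O_d(k^{d/2})^k$. One then mollifies and Taylor-expands the multivariate function $\sgn\circ h$ in the variables $P_{i,j}$; with linear-in-$k$ moment growth the Taylor error is controllable. This decomposition, together with the anisotropic mollification (different smoothness parameters $C_i$ on different blocks of coordinates), is where the doubly-exponential dependence in $k$ actually arises. Your proposal omits any analogue of this step, and without it the parameter calibration you flag as ``where the work lies'' cannot be carried out for $d\ge 3$.

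A smaller but real issue: deducing Carbery--Wright-type anti-concentration for $p(X)$ from matched moments via Paley--Zygmund is not straightforward; Paley--Zygmund gives lower tail bounds, not small-ball probabilities around arbitrary $t$. The paper sidesteps this entirely by a sandwiching argument: it proves one-sided inequalities relating $\E[F(X)]$ to $\E[\sgn(p(Y)\pm c)]$ for small $c$, and then invokes anti-concentration only for the fully independent $Y$.
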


We define the notation $A\approx_\epsilon B$ to mean $|A-B|=O(\epsilon)$.

The proof of Proposition \ref{mainProp} will be analogous to the proof of the main Theorem in \cite{DKN}.  Our basic idea is as follows.

In Section \ref{momentSec} we prove bounds on the moments of multilinear Gaussian polynomials.  These results are essentially a reworking of the main result of \cite{Moments}.

In Section \ref{structureSec}, we use these bounds to prove a structure Theorem for multilinear polynomials.  In particular, we prove that we can write $p(X)$ in the form $h(P_1(X),P_2(X),\ldots,P_N(X))$ where $h$ is a polynomial and $P_i(X)$ are multilinear polynomials with relatively small higher moments.  More specifically, the polynomials $P_i$ will be split into $d$ different classes, with the $i^{th}$ class consisting of $n_i$ polynomials each of whose $m_i^{th}$ moments are $O_d(m_i)^{m_i/2}$.  This decomposition allows us to write $f(X)=\sgn(P(X))$ as $\sgn(h(P_1(X),\ldots,P_N(X)))$.

From here we make use of the FT-Mollification method (see \cite{DKN} for another example of this technique).  The basic idea will be to approximate $\sgn\circ h$ by some smooth function $\tilde h$, and let $\tilde f(X) = \tilde h(P_1(X),\ldots,P_N(X))$, which we do in Section \ref{FTMSec}.  Our general strategy now will be to prove the sequence of approximations:
$$
\E[f(Y)] \approx_\epsilon \E[\tilde f(Y)] \approx_\epsilon \E[\tilde f(X)] \approx_\epsilon \E[f(X)].
$$
The middle equality will be proved by approximation $\tilde f$ by one of it's Taylor polynomials.  This is a polynomial, and hence its expectation is preserved under limited independence.  The Taylor error can again be bounded by a polynomial, which will have small expectation since the $P_i$ have small moments.  We cover this in Section \ref{TaylorSec}.

The first approximation above holds roughly because $\tilde f$ approximates $f$ everywhere except near places where $f$ changes sign.  The result will hold due to anti-concentration results for $p(Y)$.  The last approximation similarly holds because of anti-concentration of $p(X)$.  Although anticoncentration of the $k$-independent $X$ can be proven using the above techniques applied to some other function $g$ for which $\tilde g$ is an upper bound for $f$, we deal with the problem indirectly.  In particular, we show that $\E[f(X)]$ can be bounded on either side by $\E[\sgn(p(Y)+c)]+O(\epsilon)$ for $c$ a small constant, and use anticoncentration of $p(Y)$.  We cover this in Section \ref{AntiConSeq}.

Our application of FT-Mollification is complicated by the fact that our moment bounds on the $P_j$ are not uniform in $j$.  To deal with this, we will construct $\tilde h$ to have different degrees of smoothness in different directions, and the parameter $C_i$ will describe the amount of smoothness along the $i^{th}$ set of coordinates (corresponding the the $i^{th}$ class of the $P_j$).  This forces us to come up with modified techniques for producing $\tilde h$ and dealing with the Taylor polynomial and Taylor error.

In Section \ref{ReductionSec}, we reduce the general case to the case of multilinear polynomials by approximating $p(X)$ by a multilinear polynomial in some larger number of variables.

Finally, in Section \ref{kIndepConc}, we discuss the actual requirements for $k$ and the possibility of extended our results to the Bernoulli setting.

\section{Moment Bounds}\label{momentSec}

In this Section, we prove a bound on the moments of arbitrary degree-$d$ multilinear polynomials of Gaussians.  Our bound is based on the main result of \cite{Moments}.  It should be noted that this result is the only reason that we restrict ourselves for most of this paper to the case of multilinear polynomials, as it will make our bound easier to state and work with.

Throughout this Section, we will refer to two slightly different notions that of a multilinear polynomial and that of a multilinear form.  For our purposes, a multilinear polynomial $p(X)$ ($X$ has $n$ coordinates) will be a polynomial so that the degree of $p$ with respect to any of the coordinates of $X$ is at most 1. A multilinear form will be a polynomial $q(X^1,X^2,\ldots,X^m)$ (here each of the $X^i$ may themselves have several coordinates) so that $q$ is linear (homogeneous degree 1) in each of the $X^i$.  We call such a $q$ symmetric if it is symmetric with respect to interchanging the $X^i$.  Finally, we note that to every homogeneous multilinear polynomial $p$ of degree $d$, there is an associated multilinear form $q(X^1,\ldots,X^d)$, which is the unique symmetric multilinear form so that $p(X)=q(X,\ldots,X).$

Before we can state our results we need a few more definitions.

\begin{defn}
Let $p:\R^n\rightarrow \R$ be a homogeneous degree-$d$ multilinear polynomial.  Let $X_i, 1\leq i\leq n$ be independent standard Gaussians.  For a integers $1\leq \ell \leq d$ define $M_\ell(p)$ in the following way.  Consider all possible choices of: a partition of $\{1,\ldots,n\}$ into sets $S_1,S_2,\ldots,S_\ell$; a sequence of integers $d_i \geq 1, 1\leq i\leq \ell$ so that $d=\sum_{i=1}^\ell d_i$; a sequence of multilinear polynomials $p_i, 1\leq i \leq \ell$ so that $p_i$ depends only on the coordinates in $S_i$, $p_i$ is homogeneous of degree $d_i$, and $\E[p_i(X)^2]=1$.  We let $M_\ell(p)$ be the supremum over all choices of $S_i,d_i,p_i$ as above of $$\left(\E\left[p(X)\prod_{i=1}^\ell p_i(X)\right]\right)^{1/2}.$$
\end{defn}

Note that by Cauchy-Schwartz we have that $M_\ell(p)\leq \E[p(X)^2]^{1/4}$.  We now define a similar quantity more closely related to what is used in \cite{Moments}.

\begin{defn}
Let $q:(\R^n)^d\rightarrow \R$ by a degree-$d$ multilinear form.  Let $X^i, 1\leq i \leq d$ be independent standard $n$-dimensional Gaussians.  For integers $1\leq \ell \leq d$ define $M_\ell(q)$ in the following way.  Consider all possible choices of: a partition of $\{1,\ldots,d\}$ into non-empty subsets $S_1,\ldots,S_\ell$, with $S_i = \{c_{i,1},\ldots,c_{i,d_i}\}$; and a set of multilinear forms $q_i$ of degree-$d_i$ with $\E[q_i(X^{c_{i,1}},\ldots,X^{c_{i,d_i}})^2] \leq 1$.  We define $M_\ell(q)$ to be the supremum over all such choices of $S_i$ and $q_i$ of
$$
\left(\E\left[q(X^1,\ldots,X^d)\prod_{i=1}^\ell q_i(X^{c_{i,1}},\ldots,X^{c_{i,d_i}})\right]\right)^{1/2}.
$$
\end{defn}

We now state the moment bound whose proof will take up the rest of this Section.

\begin{prop}\label{momentProp}
Let $p$ be a homogenous degree $d$ multilinear polynomial, and $X$ a family of independent standard Gaussians, and $k\geq 2$.  Then
$$
\E[|p(X)|^k] = \Theta_d\left( \sum_{\ell=1}^d M_\ell(p) k^{\ell/2} \right)^k.
$$
\end{prop}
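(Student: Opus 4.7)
The strategy is to reduce the claim to the Lata\l a-style moment bound for decoupled Gaussian chaos (the content of \cite{Moments}), and then translate that bound from the symmetric multilinear form $q$ with $p(X)=q(X,\ldots,X)$ back into a statement about $p$. The three ingredients would be Gaussian decoupling, the chaos moment bound, and an equivalence between the polynomial-level and form-level partition norms.

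I would first apply a standard decoupling inequality for symmetric multilinear forms whose coefficients vanish on the diagonal (de la Pe\~na--Montgomery-Smith style). Since $p$ is multilinear, the symmetric form $q$ has precisely this ``no repeated indices'' property, so
$$
\E[|p(X)|^k]^{1/k} = \Theta_d\!\left(\E[|q(X^1,\ldots,X^d)|^k]^{1/k}\right),
$$
with $X^1,\ldots,X^d$ independent copies of $X$ and the constant depending only on $d$. The main result of \cite{Moments} then yields $\E[|q(X^1,\ldots,X^d)|^k]^{1/k} = \Theta_d\bigl(\sum_{\ell=1}^d M_\ell(q)\, k^{\ell/2}\bigr)$.

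What remains is the equivalence $M_\ell(p) = \Theta_d(M_\ell(q))$, which matches up the ``cut'' of a $d$-tensor along a partition of the $n$ variables to the corresponding cut along a partition of the $d$ slots. For $M_\ell(p) = O_d(M_\ell(q))$: given a variable partition $\{1,\ldots,n\} = T_1 \sqcup \cdots \sqcup T_\ell$ and test polynomials $p_i$ on $T_i$, Gaussian orthogonality together with the disjointness of the $T_i$ forces only the ``resonant'' component of $p$ (the part that uses exactly $d_i$ variables from each $T_i$) to contribute to $\E[p\prod p_i]$; passing from each $p_i$ to its associated symmetric form $q_i$ and invoking decoupling again should realize this expectation as $q$ evaluated against a slot partition of sizes $d_1,\ldots,d_\ell$ with test forms $q_i$. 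For the reverse direction, given a near-optimal slot partition with forms $q_i$, an averaging/restriction argument (splitting the coordinate set into $\ell$ random blocks and conditioning) should produce test forms supported on disjoint variable blocks at only a $d$-dependent loss, reducing to the polynomial case.

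The main obstacle will be this last translation step: the two definitions use different combinatorial conventions, and matching them requires careful tracking of multinomial and permutation factors that must be absorbed into the $d$-dependent constant, along with verifying that restricting test objects to disjoint coordinate supports does not cost too much. The decoupling and chaos-moment ingredients are essentially quoted from known Gaussian analysis, so I would expect the bulk of the technical content of Section \ref{momentSec} to lie in carrying out this variable-vs-slot-partition comparison.
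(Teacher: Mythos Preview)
Your proposal is correct and matches the paper's proof essentially step for step: decoupling via de~la~Pe\~na--Montgomery-Smith to relate the moments of $p(X)$ and $q(X^1,\ldots,X^d)$, Lata\l a's bound for the decoupled form, and the comparison $M_\ell(p)=\Theta_d(M_\ell(q))$ proved by symmetrizing the $p_i$ in one direction and a random coordinate-partition averaging argument in the other. The only cosmetic difference is that the paper invokes decoupling in its tail-probability form and recovers moments via $\E[|Y|^k]=\int_0^\infty kx^{k-1}\pr(|Y|>x)\,dx$, whereas you state the moment comparison directly; and the paper records the explicit $d!$ and $\prod d_i!$ factors relating $\E[p_1p_2]$ to $\E[q_1q_2]$, which are exactly the ``multinomial and permutation factors'' you anticipated absorbing into the $\Theta_d$.
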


This is essentially a version of Theorem 1 of \cite{Moments}:
\begin{nthm}[\cite{Moments} Theorem 1]\label{LatalaThm}
For $q$ a degree-$d$ multilinear form and $X^i$ independent standard $n$-dimensional Gaussians and $k$ an integer at least 2,
$$
\E[|q(X^1,\ldots,X^d)|^k] = \Theta_d\left( \sum_{\ell=1}^d M_\ell(q) k^{\ell/2} \right)^k.
$$
\end{nthm}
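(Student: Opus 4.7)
The plan is to prove the upper and lower bounds of Latała's inequality separately: the lower bound by Hölder's inequality combined with Gaussian hypercontractivity applied to the variational definition of $M_\ell(q)$, and the upper bound by induction on the degree $d$, conditioning on one tensor argument at a time.

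For the lower bound, I would fix $\ell$ and let $S_1,\ldots,S_\ell$ and forms $q_1,\ldots,q_\ell$ (with $\E[q_i^2]\leq 1$) realize, up to arbitrarily small slack, the supremum defining $M_\ell(q)$. Hölder with exponents $k$ and $k/(k-1)$ yields
$$\E\!\left[q\prod_{i=1}^\ell q_i\right] \leq \E[|q|^k]^{1/k}\cdot \E\!\left[\Big|\prod_{i=1}^\ell q_i\Big|^{k/(k-1)}\right]^{(k-1)/k}.$$
Because the $q_i$ depend on disjoint blocks $X^{c_{i,\cdot}}$, their product has $L^2$ norm $\prod_i \E[q_i^2]^{1/2}\leq 1$, and Gaussian hypercontractivity (bounding $L^p$ by $(p-1)^{D/2}$ times $L^2$ for degree-$D$ polynomials) controls the second factor by a constant depending only on $d$. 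Rearranging gives $\E[|q|^k]^{1/k}\gtrsim_d M_\ell(q)\,k^{\ell/2}$; summing over $\ell$ supplies the lower direction.

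For the upper bound, I would induct on $d$. The base case $d=1$ is a single Gaussian whose $k$-th moment grows as $\sqrt{k}$, matching $M_1(q)\,k^{1/2}$. For the inductive step, condition on the last argument $X^d$: the form $q$ becomes a random degree-$(d-1)$ multilinear form $q_{X^d}(X^1,\ldots,X^{d-1})$ whose coefficients are linear in $X^d$. The inductive hypothesis, applied conditionally on $X^d$, produces
$$\E[|q|^k\mid X^d]^{1/k}=O_d\!\left(\sum_{\ell'=1}^{d-1}M_{\ell'}(q_{X^d})\,k^{\ell'/2}\right).$$
The technical core of the argument is then the moment estimate
$$\E_{X^d}\!\left[M_{\ell'}(q_{X^d})^{2k}\right]^{1/(2k)} = O_d\!\left(M_{\ell'}(q)+M_{\ell'+1}(q)\,k^{1/2}\right),$$
proved by taking near-optimal test forms $q_1,\ldots,q_{\ell'}$ for $M_{\ell'}(q_{X^d})$ and observing that the linear dependence on $X^d$ can either be absorbed into one of the existing $q_i$ (yielding a test configuration for $M_{\ell'}(q)$) or be split off as its own new singleton block (yielding a test configuration for $M_{\ell'+1}(q)$, with the extra $\sqrt{k}$ coming from Gaussian tails of the $X^d$-linear factor). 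Combining these and integrating over $X^d$ produces the full sum over $\ell = 1,\ldots,d$ and closes the induction.

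The principal obstacle is exactly this averaging step. Tracking how $M_{\ell'}$ of the conditional form relates, in moments over $X^d$, to $M_{\ell'}$ and $M_{\ell'+1}$ of the full form is the combinatorial heart of Latała's argument: every partition of $\{1,\ldots,d-1\}$ into $\ell'$ nonempty blocks extends to a partition of $\{1,\ldots,d\}$ either by merging index $d$ into an existing block (still $\ell'$ blocks) or by declaring $\{d\}$ a new block ($\ell'+1$ blocks). Threading this dichotomy through the induction while controlling all constants to depend only on $d$, and ensuring the Gaussian concentration in the extra variable is accounted for with the correct $k^{1/2}$ factor, is where the real work lies.
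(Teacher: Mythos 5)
This statement is not proved in the paper at all: it is Lata{\l}a's theorem, quoted from \cite{Moments} and used as a black box. The paper's own contribution at this point is only Proposition \ref{momentProp}, which transfers the moment bound from multilinear forms to multilinear polynomials via symmetrization and the tail-comparison (decoupling) result of \cite{tails}. So your proposal is an attempt to reprove the imported theorem itself, and as such it has genuine gaps in both directions.

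For the lower bound, your H\"older step cannot produce the $k^{\ell/2}$ factor. With exponents $k$ and $k/(k-1)$, and using that $\prod_i q_i$ has $L^2$ norm at most $1$ (hypercontractivity is not even needed here, since $k/(k-1)\leq 2$), you get only $\E[|q|^k]^{1/k} \geq \E[q\prod_i q_i]$ (which is $M_\ell(q)^2$ in the paper's normalization) --- a bound with no $k$-dependence at all; the assertion ``rearranging gives $\E[|q|^k]^{1/k}\gtrsim_d M_\ell(q)k^{\ell/2}$'' is a non sequitur. Hypercontractivity cannot rescue this, since it bounds high moments from above, whereas you need them to grow like $k^{\ell/2}$; in Lata{\l}a's proof that growth comes from a translation argument, shifting each of the $\ell$ blocks by a deterministic vector of Euclidean norm about $\sqrt{k}$ and showing that such shifts are detected by the $k$-th moment, which is a different mechanism entirely. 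For the upper bound, the induction-on-$d$ scheme has the right shape, but the step you yourself identify as the technical core, namely $\E_{X^d}\bigl[M_{\ell'}(q_{X^d})^{2k}\bigr]^{1/(2k)} = O_d\bigl(M_{\ell'}(q)+M_{\ell'+1}(q)\sqrt{k}\bigr)$, is asserted rather than proved, and the heuristic offered for it does not work as stated: the near-optimal test forms realizing $M_{\ell'}(q_{X^d})$ depend on $X^d$, so $M_{\ell'}(q_{X^d})$ is a supremum of a Gaussian process in $X^d$, and bounding its $2k$-th moment requires concentration/chaining for that supremum (this is precisely where the difficulty of \cite{Moments} lies), not a pointwise ``absorb the linear factor or split it off as a new block'' manipulation applied to a fixed test configuration. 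As written, neither half of the theorem is established; if you only need the result for this paper, cite it, and reserve your effort for the reduction in Proposition \ref{momentProp}.
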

\begin{proof}[Proof of Proposition \ref{momentProp}]
The basic idea of the proof is the relate $M_\ell(p)$ to $M_\ell(q)$ and $\E[|p|^k]$ to $\E[|q|^k]$ for $q$ the symmetric multilinear form associated to a multilinear polynomial $p$.

Let $q$ be the associated symmetric multilinear form associated to $p$.  We claim that for each $\ell$ that $M_\ell(p)=\Theta_d(M_\ell(q))$.  Suppose that $p_1$ and $p_2$ are degree $d$ multilinear polynomials, and $q_1$ and $q_2$ the associated symmetric multilinear forms.  It is easy to see (by using the standard basis of coefficients) that $\E[p_1(X)p_2(X)] = d!\E[q_1(X^1,\ldots,X^d)q_2(X^1,\ldots,X^d)].$  Similarly it is easy to see that if $p$ is a degree $d$ multilinear polynomial, and $p_i$ are degree $d_i$ multilinear polynomials on distinct sets of coordinates, and $q,q_i$ their associated symmetric multilinear forms we have
\begin{align*}
\E & \left[ p(X)\prod p_i(X)\right]  = \frac{1}{\prod d_i!}\E\left[q(X)\prod q_i(X^{(i)}) \right].
\end{align*}
Where $q(X) = q(X^1,\ldots,X^d)$, and $q_i(X^{(i)}) = q_i(X^{d_1+\ldots+d_{i-1}+1},\ldots,X^{d_1+\ldots+d_{i-1}+d_i}).$
This means that $M_\ell(p) = O_d(M_\ell(q))$ since given the appropriate $S_i,d_i,p_i$ we can use the symmetrizations of the $p_i$ to get as good a bound for $M_\ell(q)$ up to a constant factor.  To show the other direction we need to show that $M_\ell(q)$ is not changed by more than a constant factor if we require that the $q_i$ are supported on disjoint sets of coordinates.  But we note that if you randomly assign each coordinate to a $q_i$ and take the part that only depends on those coordinates, you loose a factor of at most $d^d$ on average.

Hence we have that
$$
\E[|q(X^1,\ldots,X^d)|^k] = \Theta_d\left( \sum_{\ell=1}^d M_\ell(p) k^{\ell/2} \right)^k.
$$
We just need to show that the moments of $p$ to the moments of $q$ are the same up to a factor of $\Theta_d(1)^k$.  This can be shown using the main Theorem of \cite{tails} which in our case states that there is some constant $C_d$ depending only on $d$ so that for any such $p,q$ and $x$,
$$
\pr(|p(X)|>x) \leq C_d\pr(|q(X^1,\ldots,X^d)|>x/C_d)
$$
and
$$
\pr(|q(X^1,\ldots,X^d)|>x) \leq C_d\pr(|p(X)|>x/C_d).
$$
Our result follows from noting that for any random variable $Y$ that
$$
\E[|Y|^k] = \int_0^\infty kx^{k-1}\pr(|Y|>x)dx.
$$
\end{proof}

\section{Structure}\label{structureSec}

In this Section, we will prove the following structure theorem for degree-$d$ multilinear polynomials.

\begin{prop}\label{structureProp}
Let $p$ be a degree-$d$ multilinear polynomial where the sum of the squares of its coefficients is at most 1.  Let $m_1\leq m_2\leq \ldots \leq m_d$ be integers.  Then there exist integers $n_1,n_2,\ldots,n_d$, $n_i = O_d(m_1m_2\cdots m_{i-1})$ and non-constant, homogeneous multilinear polynomials $h_1,\ldots,h_d$, $P_{i,j}, 1\leq i\leq d, 1\leq j \leq n_i$ so that:
\begin{enumerate}
\item $h_i$ is degree $i$
\item If $P_{i,a_1}\cdots P_{i,a_i}$ appears as a term in $h_i(P_{i,j})$, then the sum of the degrees of the $P_{i,a_i}$ is $d$
\item The sum of the squares of the coefficients of $h_i$ is $O_d(1)$
\item The sum of the squares of the coefficients of $P_{i,j}$ is 1
\item Each variable occurs in at most one monomial in $h_i$
\item If $Y$ is a standard Gaussian and $k\leq m_i$ then $\E[|P_{i,j}(Y)|^k] = O_d(\sqrt{k})^k$.
\item $p(Y) = \sum_{i=1}^d h_i(P_{i,1}(Y),P_{i,2}(Y),\ldots,P_{i,n_i}(Y)).$
\end{enumerate}
\end{prop}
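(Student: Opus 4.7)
The plan is to construct the decomposition by an iterative splitting procedure driven by the moment characterization of Proposition \ref{momentProp}. I maintain an exact representation
$$
p = \sum_t c_t \prod_{j=1}^{k_t} Q_{t,j},
$$
where each $Q_{t,j}$ is a unit-$L^2$-norm homogeneous multilinear polynomial, the factors in each product have pairwise-disjoint coordinate supports, and their degrees sum to $d$. I call $k_t$ the \emph{level} of term $t$, and begin with the single level-$1$ term $\|p\|_2 \cdot (p/\|p\|_2)$.

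Whenever a factor $Q_{t,j}$ at level $i = k_t$ fails the target moment bound $\E[|Q_{t,j}(Y)|^k] = O_d(\sqrt{k})^k$ for $k \leq m_i$, Proposition \ref{momentProp} forces $M_\ell(Q_{t,j}) > C_d\, m_i^{(1-\ell)/2}$ for some $2 \leq \ell \leq \deg Q_{t,j}$; by the definition of $M_\ell$, one extracts unit-norm polynomials $q_1, \ldots, q_\ell$ on pairwise-disjoint coordinate subsets of $Q_{t,j}$'s support with $c' := \E[Q_{t,j}\prod_a q_a] > C_d^2\, m_i^{1-\ell}$. Orthogonally decomposing $Q_{t,j} = c'\prod_a q_a + Q'$ (so $\|Q'\|_2^2 = 1 - (c')^2$), I replace the offending term by two: a ``promoted'' one at level $i + \ell - 1$ with coefficient $c_t c'$ (substituting $q_1, \ldots, q_\ell$ for $Q_{t,j}$), and a ``residual'' one at level $i$ with coefficient $c_t \|Q'\|_2$ (substituting $Q'/\|Q'\|_2$ for $Q_{t,j}$). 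The identity $(c')^2 + \|Q'\|_2^2 = 1$ preserves $\sum_t c_t^2 = \E[p^2] = O_d(1)$ throughout.

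I process levels in increasing order $i = 1, \ldots, d$: at each level, I split offending factors until every remaining level-$i$ factor satisfies its moment bound. A level-$i$ split transfers squared-coefficient mass strictly upward, later splits never return mass to level $i$, and at level $d$ all factors must be degree-$1$ linear (and hence automatically moment-controlled), so the procedure halts. Grouping the final terms by level and relabeling distinct factors as $P_{i,1}, \ldots, P_{i,n_i}$ yields the $h_i$. Properties 1, 2, and 4 hold by construction; property 3 follows from conservation of $\sum_t c_t^2$; property 5 holds because distinct level-$i$ monomials arise from distinct splits and share none of the freshly chosen factors; property 6 is ensured by the stopping condition together with Proposition \ref{momentProp}; and property 7 is exact.

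The main obstacle is the sharp count $n_i = O_d(m_1 \cdots m_{i-1})$: a naive energy-decrease argument yields only a polynomial-in-$m_j$ bound on the number of level-$j$ splits, which would multiply across levels into super-linear growth. Obtaining the claimed linear-in-$m_j$ per-level factor requires a more delicate inductive accounting, exploiting the precise way squared-coefficient mass at level $j$ is seeded by splits at earlier levels, together with a cautious choice of which factor and which $\ell$ to split at each step.
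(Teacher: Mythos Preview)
Your approach is exactly the paper's: process levels $s=1,\ldots,d$ in order, and at level $s$ repeatedly peel off a product $q_1\cdots q_\ell$ from any factor whose $M_\ell$ is too large, pushing that term to level $s+\ell-1$ while the orthogonal residual stays at level $s$; the conservation of $\sum_t c_t^2$ and the automatic moment control of linear factors at level $d$ give properties~3 and~6 just as you say. (The paper first reduces to homogeneous $p$ by splitting into homogeneous parts, which you should also do to secure property~2.)

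Your worry about the count $n_i = O_d(m_1\cdots m_{i-1})$ is overstated; no ``cautious choice'' of $\ell$ or of which factor to split is required. The observation you are missing is to stratify splits by their \emph{target} level. A split at level $s$ landing at level $t$ necessarily has $\ell = t-s+1$, and by your own extraction inequality it removes at least a $d$-dependent constant times $m_s^{-(t-s)}$ from the squared norm of the unit-norm factor being decomposed. Hence each level-$s$ factor spawns at most $O_d(m_s^{\,t-s})$ promoted terms at level $t$, each contributing $O_d(1)$ new factors there. Assuming inductively that $n_s = O_d(m_1\cdots m_{s-1})$, processing all of level $s$ adds at most $O_d\bigl(m_1\cdots m_{s-1}\cdot m_s^{\,t-s}\bigr)$ factors at level $t$; now invoke the hypothesis $m_s \le m_{s+1} \le \cdots \le m_{t-1}$ to bound $m_s^{\,t-s} \le m_s m_{s+1}\cdots m_{t-1}$, and sum the at most $d$ such contributions over $s<t$. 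This is precisely the paper's accounting, and it is straightforward rather than delicate.
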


This will allow us to write $p$ in terms of other polynomials each with smaller moments.  The basic idea of the proof follows from a proper interpretation of Proposition \ref{momentProp}.  Essentially Proposition \ref{momentProp} says that the higher moments of $p$ will be small unless $p$ has some significant component consisting of a product of polynomials $P_1,\ldots,P_\ell$ of lower degree.  The basic idea is that if such polynomials exist, we can split off these $P_i$ as new polynomials in our decomposition, leaving $p-P_1\cdots P_\ell$ with smaller size than $p$.  We repeatedly apply this procedure to $p$ and all of the other polynomials that show up in our decomposition.  Since each step decreases the size of the polynomial being decomposed, and produces only new polynomials of smaller degree, this process will eventually terminate.  Beyond these ideas, the proof consists largely of bookkeeping to ensure that we have the correct number of $P$'s and that they have an appropriate number of small moments.

\begin{proof}
We first prove our statement for homogeneous, multilinear polynomials $p$.  We reduce the general case to this one by writing $p$ as a sum of its homogeneous parts and decomposing each of them.

We would like to simply use the decomposition $P_{1,1}=p$ and $h_1$ is the identity, but the moments of $p$ may be too large.  On the other hand, we know by Proposition \ref{momentProp} that this can only be the case if $p$ has large correlation with some product of smaller degree polynomials $P_1\cdots P_k$.  So if $c=\E[p\cdot P_1\cdots P_k]$, we can write $p'=p-c P_1\cdots P_k$.  Now either $p'$ has small moments or we can break off another product of polynomials.  This process must eventually terminate because when we replaced $p$ by $p'$ we decreased the expectation of its square by $c^2$.  We will then apply this technique recursively to each of the $P_i$.

We define a dot product on the space of multilinear polynomials $\dotp{P}{Q} = \E[P(Y)Q(Y)]$ where $Y$ is a standard Gaussian.  Note that the square of the corresponding norm is just $|P|^2$ equals the sum of the squares of the coefficients of $P$.

We begin by letting $q=p$.  We note that by Proposition \ref{momentProp} that the $k^{th}$ moment of $q$ for $k\leq m_1$ is $O_d(\sqrt k)^k$ unless for some $2\leq \ell \leq d$ we have that $M_\ell(q) \geq m_1^{m_1/2}/m_1^{\ell/2}$, or equivalently, unless there exist polynomials $P_1,\ldots,P_\ell$ of norm 1, so that $c=\dotp{q}{P_1\cdots P_\ell} \geq m_1^{(1-\ell)/2}$.  If this is the case, we replace $q$ by $q'=q-cP_1\cdots P_\ell$.  Note that $|q'|^2 = |q|^2-c^2$.  We repeat this process with $q'$ until finally we are left with a polynomial $q$ so that for all $k\leq m_1$ the $k^{th}$ moment of $q$ is $O_d(\sqrt k)^k$ (this process must terminate since at each step we decrease $|q|^2$ by at least $m_1^{1-d}$).  We now can write $p$ as $q$ plus a sum of $c_i$ times products of lower degree polynomials.  It should be noted that the sum of the squares of the $c_i$ is at most 1.  Letting $P_{1,1}=q$ and $h_1$ be the identity, we can now write
$$
p(Y) = \sum_{i=1}^d h_i(P_{i,1}(Y),P_{i,2}(Y),\ldots,P_{i,n_i}(Y)).
$$
Where $|h_i|=O_d(1)$, $|P_{i,j}| \leq 1$, $n_i = O_d(m_1^{i-1})$, and for $k\leq m_1$, the $k^{th}$ moment of $P_{1,j}$ is $O_d(\sqrt k)^k.$  Unfortunately, the moments of the other $P$'s might be too large.  We show by induction on $s$ that we have such a decomposition where all of the appropriate moments of the $P_{i,j}$ for $i\leq s$ are bounded and so that $n_i = O_d(m_1m_2\cdots m_{i-1})$ for all $i$.

We have already proved the $s=1$ case.  To prove the general case, we first write $p$ as $\sum_{i=1}^d h_i(P_{i,1}(Y),P_{i,2}(Y),\ldots,P_{i,n_i}(Y))$ using the induction hypothesis.  This satisfies all of our criteria except that the $P_{s,j}$ might have moments which are too large.  We fix this by rewriting each of the $P_{s,j}$ using the same method we originally used to rewrite $p$, only guaranteing that the first $m_s$ moments are small.  This will make it so that our new $P_{s,j}$ have appropriately bounded moments, but may introduce new terms in the $h_t$ for $t>s$ (if some term shows up in multiple monomials, define several $P_{i,t}$ that are equal).
We need to make sure that we did not introduce too many new terms and that the sum of the squares of the coefficients is not too large.

To show the latter note that our original procedure at most doubled the sum of the squares of the coefficients.  Therefore applying this to each $P_i$ in a term $cP_1\cdots P_s$ will increase the sum of the squares of the coefficients by a factor of at most $2^s$.  Hence since the sum of the squares of the coefficients was $O_d(1)$ before, it still is afterwards.

Finally we need to show that our new decomposition did not introduce too many new terms.  It is not hard to see that for each $P_{s,i}$ we need to introduce $O_d(m_s^{t-s})$ new $P_{t,j}$ terms.  Therefore the total number of such new terms is $O(m_s^{t-s}n_s) = O_d(m_1m_2\cdots m_{t-1})$.

Finally we note that our induction terminates at $s=d$.  This is because the $P_{d,j}$ must be linear polynomials of bounded norm, and therefore automatically satisfy the necessary moment bounds.  This completes our inductive step and proves the Proposition.
\end{proof}

\section{FT-Mollification}\label{FTMSec}

We let $F$ be a degree-$d$ polynomial threshold function $F=\sgn(p)$, where $p$ is a degree $d$ multilinear polynomial in $n$ variables whose sum of squares of coefficients equals 1.  We pick $m_1,\ldots,m_d$ (their exact sizes will be determined later).  For later convenience, we assume the $m_i$ are all even.  We then have a decomposition of $F$ given by Proposition \ref{structureProp} as
\begin{align*}
F(X) & = \sgn\left(\sum_{i=1}^d h_i(P_{i,1}(X),\ldots,P_{i,n_i}(X)) \right)\\ & = f(P_1(X),P_2(X),\ldots,P_d(X)) \\ & = f(P(X))
\end{align*}
where $P_i(X)$ is the vector-valued polynomial $(P_{i,1}(X),\ldots,P_{i,n_i}(X))$, $P$ is the vector of all of them, and $f$ is the function $f(P_1,\ldots,P_d) = \sgn(\sum h_i(P_i))$.  Furthermore, we have that for $k\leq d m_i$ the $k^{th}$ moment of any coordinate of any coordinate of $P_i$ is $O_d(\sqrt{k})^k$.  We also have that $h_i$ is a degree $i$ multilinear polynomial the sum of the squares of whose coefficients is at most 1.

Our basic strategy now will involve approximating $f$ by a smooth function $\tilde f$, and letting $\tilde F(X) = \tilde f(P(X))$.  We will then proceed to prove
\begin{equation}\label{ApproxEqn}
\E[F(Y)] \approx_\epsilon \E[\tilde F(Y)] \approx_\epsilon \E[\tilde F(X)] \approx_\epsilon \E[F(X)].
\end{equation}
We will produce $\tilde f$ from $f$ using the technique of mollification.  Namely we will have $\tilde f = f * \rho$ for an appropriately chosen smooth function $\rho$.  However, we will need this $\rho$ to have several other properties so we will go into some depth here to construct it.

\begin{lem}\label{rhoLem}
Given an integer $n\geq 0$ and a constant $C$, there is a function $\rho_C:\R^n\rightarrow \R$ so that
\begin{enumerate}
\item $\rho_C \geq 0$.
\item $\int_{\R^n} \rho_C(x)dx = 1$.
\item For any unit vector $v\in \R^n$, and any non-negative integer $k$, $\int_{\R^n} |D_v^k \rho_C(x)|dx \leq C^k$, where $D_v^k$ is the $k^{th}$ directional derivative in the direction $v$.
\item For $D>0$, $\int_{|x|>D} |\rho(x)|dx = O\left(\left(\frac{n}{CD}\right)^2\right)$.
\end{enumerate}
\end{lem}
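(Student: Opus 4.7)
The plan is to construct a single ``mother'' mollifier $\phi\colon\R^n\to\R$ satisfying (1)--(4) with some absolute constant in place of $C$, and then obtain $\rho_C$ from $\phi$ by a one-parameter rescaling of the form $\rho_C(x) = \lambda^n \phi(\lambda x)$ with $\lambda$ proportional to $C$. The construction of $\phi$ will be Fourier-analytic.

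First I would fix a smooth, real-valued, radial bump $u\colon\R^n\to\R$ supported in the ball of radius $1/2$ with $\|u\|_2 = 1$ and $\|\nabla u\|_2 = O(n)$. The $O(n)$ bound matches the first Dirichlet eigenvalue of the Laplacian on a ball of radius $1/2$, which is $\Theta(n^2)$, and is concretely achieved by the cone $u(x) \propto (1-2|x|)_+$ via a direct beta-integral computation. Setting $\phi = \hat u^2$, property (1) is immediate since $u$ real and even makes $\hat u$ real, and property (2) follows from Parseval: $\int \phi = \|\hat u\|_2^2 = \|u\|_2^2 = 1$.

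For property (3), I would expand $D_v^k \phi$ by Leibniz as $\sum_{j=0}^k \binom{k}{j}(D_v^j \hat u)(D_v^{k-j}\hat u)$, apply Cauchy--Schwarz to each term, and then use the Fourier identity $D_v^j \hat u = \widehat{(-2\pi i)^j(x\cdot v)^j u}$ combined with Parseval to bound $\|D_v^j \hat u\|_2 = (2\pi)^j \|(x\cdot v)^j u\|_2 \leq \pi^j$, using $|x \cdot v|\leq |x| \leq 1/2$ on $\supp u$. This yields $\|D_v^k \phi\|_1 \leq \sum_j \binom{k}{j}\pi^k = (2\pi)^k$. For property (4), the same Fourier dictionary gives $\int |x|^2 \phi(x)\,dx = (2\pi)^{-2}\|\nabla u\|_2^2 = O(n^2)$, and then Markov's inequality supplies the tail. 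The final rescaling $\rho_C(x) = (C/(2\pi))^n \phi((C/(2\pi))x)$ is designed so that $D_v^k$ picks up a factor $(C/(2\pi))^k$, absorbing the $(2\pi)^k$ from (3) and producing exactly $C^k$; the tail bound rescales to $O((n/(CD))^2)$.

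The main obstacle is producing a base bump $u$ with the optimal gradient bound $\|\nabla u\|_2 = O(n)\|u\|_2$; a generic smooth truncation would give a suboptimal constant and degrade the tail bound to something with an extra factor of $n$. The radial/eigenvalue viewpoint (or an explicit cone calculation) is what makes the desired $(n/(CD))^2$ tail hold, and everything else in the argument reduces to routine Parseval, Leibniz, Cauchy--Schwarz, and change-of-variables manipulations.
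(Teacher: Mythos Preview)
Your proposal is correct and takes essentially the same approach as the paper: the paper also sets $\rho$ equal to the normalized square of the Fourier transform of a compactly supported radial bump (it uses $B(\xi)=(1-|\xi|^2)_+$ on the unit ball rather than your cone on the half-ball), proves properties (3) and (4) via the same Leibniz--Cauchy--Schwarz--Plancherel chain and the same $\int |x|^2\rho = \|\nabla B\|_2^2/\|B\|_2^2 = O(n^2)$ computation, and then rescales. One cosmetic point: you call $u$ ``smooth'' but then instantiate it by the cone $(1-2|x|)_+$, which is not smooth; this is harmless since the argument only needs $u$ compactly supported in $L^2$ with $\|\nabla u\|_2=O(n)\|u\|_2$, and indeed the paper's $B$ is not $C^2$ either.
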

\begin{proof}
We prove this for $C=2$ and we note that we can obtain other values of $C$ by setting $\rho_C(x) = (C/2)^n \rho_2(Cx/2).$  We begin by defining
$$
B(\xi) = \begin{cases} 1 - |\xi|^2 \ \ & \textrm{if} \ |\xi|\leq 1\\ 0 \ \ & \textrm{else} \end{cases}%
$$
We then define
$$
\rho_2(x) = \rho(x) = \frac{|\hat{B}(x)|^2}{|B|_2^2}.
$$
Where $\hat B$ denotes the Fourier transform of $B$.  Clearly $\rho$ is non-negative.  Also clearly
$$
\int_{\R^n} \rho(x)dx = \frac{|\hat B|_2^2}{|B|_2^2} = 1
$$
by the Plancherel Theorem.

For the third property we note that
$$
D_v^k \rho = \frac{1}{|B|_2^2} \sum_{i=0}^k \binom{k}{i} D_v^i(\hat B) D_v^{k-i} \overline{(\hat B)}.
$$

Letting $\xi$ be the dual vector corresponding to $v$ we have that
\begin{align*}
\left| D_v^k \rho\right|_1 & \leq \frac{1}{|B|_2^2}\sum_{i=0}^k \binom ki \left|D_v^i(\hat B) D_v^{k-i} \overline{(\hat B)}\right|_1\\
& \leq \frac{1}{|B|_2^2}\sum_{i=0}^k \binom ki \left|D_v^i(\hat B)\right|_2\left|D_v^{k-i}(\hat B)\right|_2\\
& \leq \frac{1}{|B|_2^2}\sum_{i=0}^k \binom ki \left|\xi^i B\right|_2\left|\xi^{k-i} B \right|_2\\
& \leq \frac{1}{|B|_2^2}\sum_{i=0}^k \binom ki |B|_2^2\\
& = \sum_{i=0}^k \binom ki \\
&= 2^k.
\end{align*}
For the last property we note that it is enough to prove that
$$
\int_{\R^n} |x|^2\rho(x) dx = O(n^2).
$$
We have that
\begin{align*}
\int_{\R^n} |x|^2\rho(x) dx & = \frac{1}{|B|_2^2} \sum_{i=1}^n |x_i\hat B|_2^2\\
& = \frac{\sum_{i=1}^n\left| \frac{\partial B}{\partial \xi_i}\right|_2^2}{|B|_2^2}.
\end{align*}
Now $ \frac{\partial B}{\partial \xi_i} $ is $2\xi_i$ on the unit ball and 0 outside.  Hence the sum of the squares of these is $2|\xi|^2$ on $|\xi|<1$ and 0 outside.  Hence since both numerator and denominator above are integrals of spherically symmetric functions, their ratio is equal to
\begin{align*}
\int_{\R^n} |x|^2\rho(x) dx & = \frac{2\int_0^1 r^{n+1} dr}{\int_0^1 r^{n-1}(1-r^2)^2 dr}.
\end{align*}
Using integration by parts, the denominator is
\begin{align*}
\int_0^1 r^{n-1}(1-r^2)^2 dr & = \frac{4}{n}\int_0^1 r^{n+1}(1-r^2)dr \\ & = \frac{16}{n(n+2)}\int_0^1 r^{n+3}dr \\ & = \frac{16}{n(n+2)(n+4)}.
\end{align*}
Hence
$$
\int_{\R^n} |x|^2\rho(x) dx = \frac{n(n+4)}{8} = O(n^2).
$$
\end{proof}

We are now prepared to define $\tilde f$.  We pick constants $C_1,\ldots,C_d$ (to be determined later).  We let \begin{equation}\label{rhoDefEqn}\rho(P_1,\ldots,P_d) = \rho_{C_1}(P_1)\cdot \rho_{C_2}(P_2)\cdots \rho_{C_d}(P_d).\end{equation}  Above the $\rho_{C_i}$ is defined on $\R^{n_i}$. We let $\tilde f$ be the convolution $\tilde f = f * \rho$.

\section{Taylor Error}\label{TaylorSec}

In this Section, we prove the middle approximation of Equation \ref{ApproxEqn} for appropriately large $k$.  The basic idea will be to approximate $\tilde f$ by its Taylor series, $T$.  $T(P(X))$ will be a polynomial of degree at most $k$ and hence $\E[T(P(Y))]=\E[T(P(X))].$  Furthermore, we will bound the Taylor error by some polynomial $R$ and show that $\E[R(P(Y))]=\E[R(P(X))]$ is $O(\epsilon)$ for appropriate choices of $m_i,C_i$.  In particular, we let $T$ be the polynomial consisting of all of the terms of the Taylor expansion of $\tilde f$ whose total degree in the $P_i$ coordinates is less than $m_i$ for all $i$.  Note that a polynomial of this form is about the best we can do since we only have control over the size of moments up to the $m_i^{th}$ moment on the $i^{th}$ block of coordinates.  Our error bound will be the following

\begin{prop}\label{taylorErrorProp}
$$
|T(P) - \tilde f(P)| \leq \prod_{i=1}^d \left(1+\frac{C_i^{m_i}|P_i|^{m_i}}{m_i!} \right)-1.
$$
\end{prop}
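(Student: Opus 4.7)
The plan is to exploit the factorization $\rho = \rho_{C_1}(P_1) \cdots \rho_{C_d}(P_d)$ together with the convolution representation
$$\tilde f(P) = (f * \rho)(P) = \int f(z) \prod_{i=1}^d \rho_{C_i}(P_i - z_i)\, dz.$$
The block-Taylor operator $T$ commutes with integration in $z$ and factors across blocks: if $\tau_i(P_i, z_i)$ denotes the Taylor polynomial in $P_i$ (centered at $P_i = 0$, of total degree less than $m_i$) of the smooth function $P_i \mapsto \rho_{C_i}(P_i - z_i)$, then $T(P) = \int f(z) \prod_{i=1}^d \tau_i(P_i, z_i)\, dz$. Subtracting and using $|f| \leq 1$ gives
$$|\tilde f(P) - T(P)| \leq \int \left| \prod_{i=1}^d \rho_{C_i}(P_i - z_i) - \prod_{i=1}^d \tau_i(P_i, z_i) \right| dz.$$

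The key step is to expand the product difference by telescoping rather than by the symmetric product identity. Setting $a_i(z_i) = \rho_{C_i}(P_i - z_i)$ and $b_i(z_i) = \tau_i(P_i, z_i)$ and applying
$$\prod_{i=1}^d a_i - \prod_{i=1}^d b_i = \sum_{i=1}^d \left(\prod_{j<i} a_j\right)(a_i - b_i)\left(\prod_{j>i} b_j\right),$$
then passing absolute values inside and using that the integrand factors across $z_1, \ldots, z_d$, the $d$-dimensional integral collapses to a sum of products of one-variable $L^1$ norms:
$$|\tilde f(P) - T(P)| \leq \sum_{i=1}^d \prod_{j<i} \|\rho_{C_j}\|_1 \cdot \|a_i - b_i\|_1 \cdot \prod_{j>i} \|\tau_j(P_j, \cdot)\|_1.$$

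I would next bound each factor using Lemma~\ref{rhoLem}. Setting $A_i := C_i^{m_i}|P_i|^{m_i}/m_i!$: property~(2) gives $\|\rho_{C_j}\|_1 = 1$; the integral form of Taylor's remainder in $P_i$, combined with Fubini and property~(3) applied to the unit vector $v_i = P_i/|P_i|$, yields
$$\|a_i - b_i\|_1 \leq \frac{|P_i|^{m_i}}{(m_i-1)!}\int_0^1 (1-u)^{m_i-1}\|D_{v_i}^{m_i}\rho_{C_i}\|_1\,du \leq A_i;$$
and the triangle inequality applied to $\tau_j = \rho_{C_j}(P_j - \cdot) - (\rho_{C_j}(P_j - \cdot) - \tau_j)$ gives $\|\tau_j(P_j, \cdot)\|_1 \leq 1 + A_j$.

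Substituting these estimates yields $|\tilde f(P) - T(P)| \leq \sum_{i=1}^d A_i \prod_{j>i}(1+A_j)$, and a second application of the same telescoping identity (this time with $a_i = 1+A_i$, $b_i = 1$) identifies this sum with $\prod_{i=1}^d (1+A_i) - 1$, which is the claim. The main obstacle is recognizing that the telescoping expansion is what delivers the clean bound: the symmetric identity $\prod a_i - \prod b_i = \sum_{\emptyset \neq S}\prod_{i \in S}(a_i - b_i)\prod_{i \notin S}b_i$ would give a bound of order $\sum_{\emptyset \neq S}(2^{|S|}-1)\prod_{i \in S}A_i$, exponentially too large in $|S|$.
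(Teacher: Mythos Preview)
Your proof is correct. It differs from the paper's proof mainly in organization. The paper proceeds by induction on the number of blocks, defining intermediate objects $f_s = f^{\tilde 1\cdots\tilde s}$ (convolved in the first $s$ blocks) and $T_s = f^{T_1\cdots T_s}$ (Taylor-truncated in the first $s$ blocks), and bounding $|T_{s+1}-f_{s+1}|$ via the splitting $|T_s^{T_{s+1}}-T_s^{\widetilde{s+1}}| + |(T_s-f_s)^{\widetilde{s+1}}|$; the first piece is handled by Lemma~\ref{basicTaylorLem} and the second by the inductive hypothesis together with $\|\rho_{C_{s+1}}\|_1=1$. You instead pass immediately to the integral representation of the convolution and telescope the product of kernels inside the integrand, so the whole argument becomes a single algebraic identity plus Fubini. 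The analytic input is identical in both proofs (the bound $\|D_v^m\rho_C\|_1\le C^m$ from Lemma~\ref{rhoLem}); your route is slightly shorter since it avoids naming the intermediate operators, while the paper's route makes the block-by-block structure more explicit and reuses the packaged single-block Lemma~\ref{basicTaylorLem}. Your closing remark about the symmetric product identity is correct but not essential for the paper: the extra $2^{|S|}-1$ factors would be absorbed into the $O_d(1)$ constants in Proposition~\ref{polyapproxErrorProp}, so either bound would suffice there, though only the telescoping one matches the exact statement of the Proposition.
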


First we prove a Lemma dealing with Taylor error for a single batch of coordinates,
\begin{lem}\label{basicTaylorLem}
If $g$ is a multivariate function, $\tilde g= g * \rho_C$ and $T$ is the polynomial consisting of all terms in the Taylor expansion of $\tilde g$ is degree less than $m$, then
$$
|g(x)-T(x)| \leq \frac{|g|_\infty C^m |x|^m}{m!}.
$$
\end{lem}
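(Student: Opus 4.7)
My plan is to recognize $T$ as the Taylor polynomial of $\tilde g$ of order less than $m$ at the origin, and then apply Taylor's theorem with integral remainder along the segment from $0$ to $x$. Writing $v = x/|x|$, this yields
$$
\tilde g(x) - T(x) = \frac{|x|^m}{(m-1)!}\int_0^1 (1-t)^{m-1}\, D_v^m \tilde g(tx)\,dt,
$$
and hence $|\tilde g(x) - T(x)| \leq \frac{|x|^m}{m!}\, \sup_y |D_v^m \tilde g(y)|$.

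The main technical step is then to control $|D_v^m \tilde g|_\infty$ using the smoothness that $\rho_C$ contributes. Since $\tilde g = g * \rho_C$, convolution lets me push the derivatives onto the smooth factor, so $D_v^m \tilde g = g * (D_v^m \rho_C)$. The elementary $|f_1 * f_2|_\infty \leq |f_1|_\infty |f_2|_1$ bound, combined with property 3 of Lemma \ref{rhoLem}, then gives
$$
|D_v^m \tilde g|_\infty \;\leq\; |g|_\infty\,|D_v^m \rho_C|_1 \;\leq\; |g|_\infty\, C^m.
$$
Plugging this into the remainder estimate produces exactly $|g|_\infty C^m |x|^m/m!$, the bound claimed on the right-hand side of the lemma.

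One caveat on matching the statement as worded: the derivation above naturally controls $|\tilde g(x) - T(x)|$, and I read the $g$ on the left-hand side of the lemma as shorthand for $\tilde g$, since at $x=0$ the right-hand side vanishes (forcing $T(0)=g(0)$, whereas by construction $T(0)=\tilde g(0)$), and since the downstream use in Proposition \ref{taylorErrorProp} compares $T(P)$ against $\tilde f(P)$ rather than $f(P)$. Modulo this identification, nothing in the argument is delicate; the only potential subtlety is ensuring that convolving $g$ with a $C^\infty_c$ mollifier $\rho_C$ lets us freely pass $m$ derivatives onto $\rho_C$ regardless of the regularity of $g$, which follows from $\rho_C \in L^1$ and standard properties of convolution. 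Thus the lemma reduces to writing down the Taylor remainder in the right form and applying the $L^1$ bound on $D_v^m \rho_C$ supplied by Lemma \ref{rhoLem}.
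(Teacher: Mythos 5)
Your proposal is correct and is essentially the paper's own argument: a one-dimensional Taylor remainder bound along the segment from $0$ to $x$, followed by $|D_v^m \tilde g|_\infty = |g * D_v^m\rho_C|_\infty \leq |g|_\infty |D_v^m\rho_C|_1 \leq |g|_\infty C^m$ via property 3 of Lemma \ref{rhoLem}. Your reading of the left-hand side as $|\tilde g(x)-T(x)|$ also matches the paper's intent, since its proof (and the use in Proposition \ref{taylorErrorProp}) bounds the Taylor error of $\tilde g$, the $g$ in the statement being a typo.
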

\begin{proof}
Let $v$ be the unit vector in the direction of $x$.  Let $L$ be the line through $0$ and $x$.  We note that the restriction of $T$ to $L$ is the same as the first $m-1$ terms of the Taylor series for $\tilde g|_L$.  Using standard error bounds for Taylor polynomials we find that
$$
|g(x)-T(x)| \leq \frac{|D_v^m \tilde g|_\infty |x|^m}{m!}.
$$
But
\begin{align*}
|D_v^m \tilde g|_\infty & = |g * D_v^m \rho_C |_\infty\\
& \leq |g|_\infty |D_v^m \rho_C|_1\\
& \leq |g|_\infty C^m.
\end{align*}
Plugging this in yields our result.
\end{proof}

\begin{proof}[Proof of Proposition \ref{taylorErrorProp}]
The basic idea of the proof will be to repeatedly apply Lemma \ref{basicTaylorLem} to one batch of coordinates at a time.  We begin by defining some operators on the space of bounded functions on $\R^{n_1}\times\R^{n_2}\times\cdots\times\R^{n_d}$.  For such $g$, define $g^{\tilde i}$ to be the convolution of $g$ with $\rho_{C_i}$ along the $i^{th}$ set of coordinates.  Define $g^{T_i}$ to be the Taylor polynomial in the $i^{th}$ set of variables of $g^{\tilde i}$ obtained by taking all terms of total degree less than $m_i$.  Note that for $i\neq j$ the operations $\tilde i$ and $T_i$ commute with the operations $\tilde j$ and $T_j$ since they operate on disjoint sets of coordinates.  Note that $\tilde f = f^{\tilde 1 \tilde 2\cdots \tilde d}$ and $T=f^{T_1 T_2 \cdots T_d}$.  For $1\leq i \leq d$ let $f_i = f^{\tilde 1 \tilde 2\cdots \tilde i}$ and $T_i = f^{T_1 T_2 \cdots T_i}$.

We prove by induction on $s$ that
$$
|T_s(P)-f_s(P)| \leq  \prod_{i=1}^s \left(1+\frac{C_i^{m_i}|P_i|^{m_i}}{m_i!} \right)-1.
$$
As a base case, we note that the $s=0$ case of this is trivial.

Assume that
$$
|T_s(P)-f_s(P)| \leq  \prod_{i=1}^s \left(1+\frac{C_i^{m_i}|P_i|^{m_i}}{m_i!} \right)-1.
$$
We have that
\begin{align*}
|T_{s+1}&(P)- f_{s+1}(P)|\\ & \leq |T_s^{T_{s+1}}(P)-T_s^{\tilde{s+1}}(P)| + |T_s^{\tilde{s+1}}(P)-f_{s}^{\tilde{s+1}}(P)|.
\end{align*}
Note that
$$
T_s^{\tilde{s+1}}(P)-f_{s}^{\tilde{s+1}}(P) = \left(T_s - f_s\right)^{\tilde{s+1}}(P).
$$
Therefore since $\tilde{s+1}$ involves only convolution with a function of $L^1$ norm 1 we have that
$$
|T_s^{\tilde{s+1}}(P)-f_{s}^{\tilde{s+1}}(P)| \leq |T_s(P)-f_s(P)|_{\infty,s+1}
$$
where the subscript denotes the $L^\infty$ norm over just the $s+1^{st}$ set of coordinates.  By the inductive hypothesis, this is at most
$$
\prod_{i=1}^s \left(1+\frac{C_i^{m_i}|P_i|^{m_i}}{m_i!} \right)-1.
$$

On the other hand, applying Lemma \ref{basicTaylorLem} we have that
$$
|T_s^{T_{s+1}}(P)-T_s^{\tilde{s+1}}(P)| \leq \frac{C_{s+1}^{m_{s+1}}|P_{s+1}|^{m_{s+1}}}{m_{s+1}!}\left|T_s \right|_{\infty,s+1}.
$$
By the inductive hypothesis,
\begin{align*}
\left|T_s \right|_{\infty,s+1} & \leq |f_s|_{\infty,s+1} + |T_s-f_s|_{\infty,s+1} \\ & \leq \prod_{i=1}^s \left(1+\frac{C_i^{m_i}|P_i|^{m_i}}{m_i!} \right).
\end{align*}

Combining the above bounds, we find that
\begin{align*}
|T_{s+1}-f_{s+1}| \leq &  \prod_{i=1}^s \left(1+\frac{C_i^{m_i}|P_i|^{m_i}}{m_i!} \right) - 1 \\ & + \frac{C_{s+1}^{m_{s+1}}|P_{s+1}|^{m_{s+1}}}{m_{s+1}!}\prod_{i=1}^s \left(1+\frac{C_i^{m_i}|P_i|^{m_i}}{m_i!} \right)\\
= & \prod_{i=1}^{s+1} \left(1+\frac{C_i^{m_i}|P_i|^{m_i}}{m_i!} \right)-1.
\end{align*}
\end{proof}

We can now prove the desired approximation result
\begin{prop}\label{polyapproxErrorProp}
If $\tilde F,P,T$ as above with $m_i= \Omega_d(n_i C_i^2)$, $m_i \geq \log(2^d/\epsilon)$ for all $i$, and if $k\geq dm_i$ for all $i$, then for $X$ and $Y$ are $k$-independent families of standard Gaussians,
$$
\E[\tilde F(Y)] \approx_\epsilon \E[\tilde F(X)].
$$
\end{prop}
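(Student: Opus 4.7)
The plan is to insert the Taylor polynomial and write
$$
\E[\tilde F(Y)] - \E[\tilde F(X)] = \bigl(\E[\tilde F(Y) - T(P(Y))]\bigr) + \bigl(\E[T(P(Y))] - \E[T(P(X))]\bigr) + \bigl(\E[T(P(X)) - \tilde F(X)]\bigr),
$$
so that it suffices to show the middle term vanishes exactly and each of the two outer terms is $O(\epsilon)$. The middle identity is immediate: $T$ is by construction a polynomial whose degree in the $i$-th block of variables is less than $m_i$, and each $P_{i,j}$ is a polynomial of degree at most $d$ in the underlying Gaussians, so a degree count shows that $T(P(\cdot))$ is a polynomial whose degree in those variables is bounded by $k$ under the hypothesis, whence its expectation is the same under any two $k$-wise independent standard Gaussian distributions.

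For each outer term I would apply the pointwise bound of Proposition \ref{taylorErrorProp} and expand the product:
$$
\E\bigl[|\tilde F - T(P)|\bigr] \leq \sum_{\emptyset \ne S \subseteq [d]} \E\left[\prod_{i \in S} \frac{C_i^{m_i}\,|P_i|^{m_i}}{m_i!}\right].
$$
The $P_i$ are not independent across blocks, but H\"older's inequality reduces each $S$-summand to $\prod_{i \in S}\bigl(\E\bigl[(C_i^{m_i}|P_i|^{m_i}/m_i!)^{|S|}\bigr]\bigr)^{1/|S|}$, so the only thing to control is $\E[|P_i|^{m_i s}]$ for $s = |S| \leq d$. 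Writing $|P_i|^{m_i s} = (\sum_j P_{i,j}^2)^{m_i s/2}$ and applying Minkowski's inequality in $L^{m_i s/2}$ reduces this to the coordinatewise moment bound $\E[|P_{i,j}|^r] = O_d(\sqrt r)^r$ (valid for $r \leq d m_i$, as recorded just after Proposition \ref{structureProp}), yielding $\E[|P_i|^{m_i s}] \leq O_d(n_i m_i)^{m_i s/2}$. Combined with $m_i! \geq (m_i/e)^{m_i}$, each per-block factor collapses to roughly $\bigl(O_d(n_i C_i^2)/m_i\bigr)^{m_i/2}$; the assumption $m_i = \Omega_d(n_i C_i^2)$ with a sufficiently large implicit constant forces this below $2^{-m_i}$, and then $m_i \geq \log(2^d/\epsilon)$ forces it below $\epsilon/2^d$. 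Summing over the $2^d-1$ nonempty subsets gives the desired $O(\epsilon)$ bound.

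The estimate on the $X$ side proceeds identically, because every quantity in the chain above --- the individual $\E[|P_{i,j}|^r]$, the H\"older-combined products, and the polynomial $T(P(X))$ itself --- is the expectation of a polynomial of degree at most $k$ in the underlying Gaussian variables, and is therefore preserved by the $k$-wise independence of $X$. The main technical subtlety, and essentially the only place where parameter choices have to be checked with care, is this degree accounting: one must verify simultaneously that $T(P(X))$ stays within degree $k$ and that the $m_i s$-th moments of the $P_{i,j}$ (needed inside the H\"older step) are also seen by $k$-independence. Aside from that bookkeeping the proof is a direct combination of Proposition \ref{taylorErrorProp}, the coordinatewise moment bound inherited from Proposition \ref{structureProp}, and H\"older's inequality.
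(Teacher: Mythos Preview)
Your proof is correct and follows essentially the same route as the paper's: split through the Taylor polynomial $T$, use Proposition~\ref{taylorErrorProp} to bound the remainder pointwise, expand over nonempty subsets of $[d]$, and control each subset term via the coordinatewise moment bounds. The only cosmetic difference is that you decouple the product over $i\in S$ with H\"older's inequality and then bound $\E[|P_i|^{m_is}]$ via Minkowski, whereas the paper uses AM--GM pointwise (replacing $\prod_{i\in S} a_i$ by $\frac{1}{|S|}\sum_{i\in S} a_i^{|S|}$) and a cruder power-mean bound for $\E[|P_i|^{m_is}]$; both routes reduce to the same per-block estimate $(O_d(C_i\sqrt{n_i}/\sqrt{m_i}))^{m_i}$.
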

\begin{proof}
We note that since $T\circ P$ is a polynomial of degree at most $k$ we have that $\E[T(P(X))] = \E[T(P(Y))].$  Hence, it suffices to show that
$$
\E[|F-T|(P(X))],\E[|F-T|(P(Y))] = O(\epsilon).
$$
We will show this only for $X$ as $Y$ is analogous.  By Proposition \ref{taylorErrorProp} we have that $|F-T|$ is bounded by
$$
\prod_{i=1}^d \left(1+\frac{C_i^{m_i}|P_i|^{m_i}}{m_i!} \right)-1.
$$
This is a sum over non-empty subsets $S\subseteq\{1,2,\ldots,d\}$ of
$$
\prod_{i\in S} \left(\frac{C_i^{m_i}|P_i|^{m_i}}{m_i!} \right).
$$
Since there are only $2^d-1$ such $S$, it is enough to show that each term individually has expectation $O(\epsilon/2^d)$.  On the other hand, we have by AM-GM that each term is at most
$$
\frac{1}{|S|}\sum_{i\in S} \left(\frac{C_i^{m_i}|P_i|^{m_i}}{m_i!} \right)^{|S|}.
$$
Now the expectation of $|P_i|^{m_i|S|}$ is at most $n_i^{m_i|S|}$ times the average of the $m_i|S|^{th}$ moments of the coordinates of $P_i$.  These by assumption are $O_d(\sqrt{m_i|S|})^{m_i|S|}$.  There are $n_i$ coordinates so the moment of $|P_i|$ is at most $O_d(\sqrt{n_im_i|S|})^{m_i|S|}$. Hence the error is at most
\begin{align*}
O(2^d)\max_{i,s}&\left\{ O_d\left( \frac{C_i\sqrt{n_i m_i s}}{m_i}\right)^{m_i s} \right\}\\
& = O(2^d)\max_{i,s}\left\{ O_d\left( \frac{C_i\sqrt{n_i}}{\sqrt{m_i}}\right)^{m_i s} \right\}\\
& \leq O(2^d) e^{-\min_i m_i} = O(\epsilon).
\end{align*}

\end{proof}

\section{Approximation Error}\label{AntiConSeq}

In this Section, we will prove the first and third approximations in Equation \ref{ApproxEqn}.  We begin with the first, namely
$$
\E[F(Y)] \approx_\epsilon \E[\tilde F(Y)].
$$
Our basic strategy will be to bound
$$
|\E[F(Y)] - \E[\tilde F(Y)] \leq \E[|F(Y)-\tilde F(Y)|].
$$
In order to get a bound on this we will first show that $F-\tilde F$ is small except where $p(Y)$ is small, and then use anti-concentration results to show that this happens with small probability.  This will be true because $\rho$ is small away from $0$.  We begin by proving a Lemma to this effect.

\begin{lem}\label{rhoConcentrationLem}
Let $\rho$ be the function defined in Equation \ref{rhoDefEqn}.  Then for any $D>0$ we have that
$$
\int_{\substack{(x_1,\ldots,x_d)\in \R^{n_1}\times\cdots\times\R^{n_d} \\ \exists i: |x_i| > Dn_i\sqrt{d}/C_i}} |\rho(x)|dx = O(D^{-2})
$$
\end{lem}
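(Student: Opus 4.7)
The plan is to bound the integral by a union bound over the $d$ coordinate blocks, reducing to the single-block tail estimate already proved as property 4 of Lemma \ref{rhoLem}.

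First I would note that $\rho$ factors as a product, $\rho(x_1,\ldots,x_d) = \prod_{i=1}^d \rho_{C_i}(x_i)$, and each marginal $\rho_{C_i}$ is a probability density (nonnegative with unit integral, by properties 1 and 2 of Lemma \ref{rhoLem}). Therefore the set $A_i = \{x : |x_i| > D n_i \sqrt{d}/C_i\}$ satisfies
\begin{equation*}
\int_{A_i} \rho(x)\,dx \;=\; \int_{|x_i| > D n_i \sqrt{d}/C_i} \rho_{C_i}(x_i)\,dx_i,
\end{equation*}
since the remaining $d-1$ factors each integrate to $1$.

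Next I would apply property 4 of Lemma \ref{rhoLem} to $\rho_{C_i}$ with radius $D n_i \sqrt{d}/C_i$. This gives
\begin{equation*}
\int_{|x_i| > D n_i \sqrt{d}/C_i} \rho_{C_i}(x_i)\,dx_i \;=\; O\!\left(\left(\tfrac{n_i}{C_i \cdot D n_i \sqrt{d}/C_i}\right)^2\right) \;=\; O\!\left(\tfrac{1}{d D^2}\right).
\end{equation*}
The $n_i$ and $C_i$ cancel precisely because the threshold $D n_i \sqrt{d}/C_i$ was chosen to balance the dimension-dependent tail estimate of property 4 — this balancing is the whole point of the peculiar scaling in the statement.

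Finally, the bad region in the lemma is the union $\bigcup_{i=1}^d A_i$. A straightforward union bound yields
\begin{equation*}
\int_{\bigcup_i A_i} |\rho(x)|\,dx \;\leq\; \sum_{i=1}^d \int_{A_i} \rho(x)\,dx \;=\; d \cdot O\!\left(\tfrac{1}{dD^2}\right) \;=\; O(D^{-2}),
\end{equation*}
which is the claimed bound. There is no real obstacle here: the argument is almost entirely bookkeeping once one observes that the threshold $D n_i \sqrt{d}/C_i$ is engineered so that the single-block tail bound, the union bound factor of $d$, and the final target $D^{-2}$ line up.
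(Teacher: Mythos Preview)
Your proof is correct and follows essentially the same approach as the paper: a union bound over the $d$ coordinate blocks, factoring $\rho$ so that the other $d-1$ blocks integrate to $1$, and applying property~4 of Lemma~\ref{rhoLem} to each single-block tail to obtain $O(D^{-2}/d)$, summing to $O(D^{-2})$.
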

This will hold essentially because of the concentration property held by each $\rho_{C_i}$.
\begin{proof}
We integrate over the region where $|x_i| > Dn_i\sqrt{d}/C_i$ for each $i$.  This is a product over $j\neq i$ of $\int_{\R^{n_j}} \rho_{C_j}(x_j)$ times $\int_{|x|>Dn_i\sqrt{d}/C_i} |\rho(x)|dx$.  By Lemma \ref{rhoConcentrationLem} the former integrals are all 1, and the latter is $O(D^{-2}/d)$.  Summing over all possible $i$ yields $O(D^{-2})$.
\end{proof}

Recall that $f$ was $\sgn\circ h$, where $h=\sum h_i$ given in the decomposition of $p$ from Proposition \ref{structureProp}.  Recall that $\tilde f = f * \rho$.  We want to bound the error in approximating $f$ by $\tilde f$.  The following, is a direct consequence of Lemma \ref{rhoConcentrationLem}.

\begin{lem}\label{smoothErrorBoundLem}
Suppose $x=(x_1,\ldots,x_d)\in \R^{n_1}\times\cdots\times\R^{n_d}$.  Suppose also that for some $D>0$ and for all $y=(y_1,\ldots,y_d)\in\R^{n_1}\times\cdots\times\R^{n_d}$ so that $|x_i-y_i|\leq Dn_i\sqrt{d}/C_i$ that $h(x)$ and $h(y)$ have the same sign, then
$$
|f(x)-\tilde f(x)| = O\left(\min\{ 1 , D^{-2} \}\right).
$$
\end{lem}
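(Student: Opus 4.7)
The plan is to expand the convolution $\tilde f(x) = (f * \rho)(x)$, use that $\int \rho = 1$ to write $f(x) - \tilde f(x)$ as an integral of $(f(x) - f(y))\rho(x-y)\,dy$, and then exploit the hypothesis together with Lemma \ref{rhoConcentrationLem} to restrict the integral to a region where $\rho$ has small mass.

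Concretely, since $\int \rho(x-y)\,dy = 1$, we have
$$
f(x) - \tilde f(x) = \int_{\R^{n_1}\times\cdots\times\R^{n_d}} \bigl(f(x) - f(y)\bigr)\rho(x-y)\,dy.
$$
Because $f = \sgn \circ h$ takes values in $\{-1,0,1\}$, the integrand is bounded in absolute value by $2\rho(x-y)$, which already gives the trivial bound $|f(x) - \tilde f(x)| \leq 2 = O(1)$. To get the $O(D^{-2})$ bound, I invoke the hypothesis: for every $y$ with $|x_i - y_i| \leq D n_i \sqrt d / C_i$ for all $i$, we have $\sgn(h(y)) = \sgn(h(x))$, hence $f(y) = f(x)$ and the integrand vanishes. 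Thus the integration effectively runs over the complementary region
$$
\mathcal{B} = \bigl\{ y : \exists\, i \ \text{with} \ |x_i - y_i| > D n_i \sqrt{d}/C_i \bigr\}.
$$

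Substituting $z = x - y$ translates this into an integral of $\rho(z)$ over the set where some $|z_i| > D n_i \sqrt{d}/C_i$, which is precisely the region appearing in Lemma \ref{rhoConcentrationLem}. That lemma bounds its mass by $O(D^{-2})$. Combining this with the $|f(x)-f(y)| \leq 2$ estimate yields $|f(x) - \tilde f(x)| \leq 2 \int_\mathcal{B} \rho(x-y)\,dy = O(D^{-2})$. Taking the minimum with the trivial bound gives $|f(x) - \tilde f(x)| = O(\min\{1, D^{-2}\})$ as claimed.

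There is essentially no obstacle here: the nontrivial work has already been done in Lemma \ref{rhoConcentrationLem}, which packages the tail bound of $\rho$ along each block of coordinates (property (4) of Lemma \ref{rhoLem}) into a joint tail bound for the product kernel $\rho = \prod_i \rho_{C_i}$. The only thing to check carefully is the bookkeeping of the region and of the constants $D n_i \sqrt d / C_i$, which is exactly matched to the hypothesis, so this step is routine.
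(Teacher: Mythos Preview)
Your proof is correct and essentially identical to the paper's: both write $f(x)-\tilde f(x)=\int (f(x)-f(y))\rho(x-y)\,dy$, observe the integrand vanishes on the box $|x_i-y_i|\le Dn_i\sqrt d/C_i$ by the sign hypothesis, and bound the remaining mass by $O(D^{-2})$ via Lemma~\ref{rhoConcentrationLem}. The only cosmetic difference is that the paper obtains the $O(1)$ bound by noting $\tilde f(x)\in[\inf f,\sup f]\subseteq[-1,1]$, whereas you get it from the same integral representation; both are fine.
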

\begin{proof}
To show that the error is $O(1)$, we note that since $\rho\geq 0$ and $\int \rho(x)dx=1$ that $\tilde f(x) = (f * \rho)(x) \in [\inf(f),\sup(f)] \subseteq [-1,1]$.  Therefore $|f-\tilde f| \leq |f| + |\tilde f| \leq 2.$

For the latter, we note that $\tilde f(x) = \int_y f(y)\rho(x-y) dy$.  We note that since the total integral of $\rho$ is 1 that
$$
f(x) - \tilde f (x) = \int_y (f(x)-f(y))\rho(x-y) dy.
$$
We note that by assumption unless $|x_i-y_i| > Dn_i\sqrt{d}/C_i$ for some $i$ that the integrand is 0.  But outside of this, the integrand is at most $2\rho(x-y)$.  By Lemma \ref{rhoConcentrationLem} the total integral of this is $O(D^{-2})$.
\end{proof}

We now know that $f$ is near $\tilde f$ at points $x$ not near the boundary between the $+1$ and $-1$ regions.  Since we cannot directly control the size of these regions, we want to relate this to the region where $|h(x)|$ is small.  This should work since unless $x$ is very large, $h$ will have derivatives that aren't too big.  In particular, we prove the following.

\begin{lem}\label{near0boundLem}
Let $x\in \R^n$.  Suppose that we have $B_i\geq 0$ so that $|P_{i,j}(x)| \leq B_i$ for all $i,j$.  We have that $|F(x) - \tilde F(x)|$ is at most the minimum of $O(1)$ and
\begin{align*}
&O_d\left(\max\left\{\left(\frac{|p(x)|}{\sum_{i=1}^d n_i^2 B_i^{i-1}/C_i} \right)^{-2} ,\left(\frac{B_iC_i}{n_i}\right)^{-2}  \right\} \right).
\end{align*}
\end{lem}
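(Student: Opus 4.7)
The strategy is to invoke Lemma \ref{smoothErrorBoundLem} at the point $P(x)\in\R^{n_1}\times\cdots\times\R^{n_d}$, so that $|F(x)-\tilde F(x)|=|f(P(x))-\tilde f(P(x))|$ is bounded by $O(\min\{1,D^{-2}\})$ for an appropriate $D>0$. Since $|f|,|\tilde f|\leq 1$ the $O(1)$ bound is automatic; it suffices to exhibit the largest $D$ for which $h$ does not change sign on the neighborhood $\{q:|q_i-P_i(x)|_2\leq Dn_i\sqrt{d}/C_i \text{ for every } i\}$. Recalling $h(P(x))=p(x)$, a sufficient sign-preservation condition is $|h(q)-h(P(x))|<|p(x)|$ throughout this neighborhood.

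\textbf{Key estimate.} Bound $|h(q)-h(P(x))|\leq\sum_i|h_i(q_i)-h_i(P_i(x))|$ and work on each block. For each $i$, Proposition \ref{structureProp} says that $h_i$ is a degree-$i$ multilinear polynomial whose monomials involve pairwise disjoint variables and whose coefficients $c_m$ satisfy $\sum_m c_m^2=O_d(1)$; in particular $|c_m|=O_d(1)$ and the number of monomials $M_i$ is $O_d(n_i)$. Writing $u_i=q_i-P_i(x)$ with $|u_i|_2\leq\delta_i:=Dn_i\sqrt{d}/C_i$, expand
\[
h_i(P_i(x)+u_i)-h_i(P_i(x))=\sum_m c_m\sum_{\emptyset\neq T\subseteq S_m}\prod_{j\in T}u_{i,j}\prod_{j\in S_m\setminus T}P_{i,j}(x),
\]
and use $|P_{i,j}(x)|\leq B_i$ together with $|u_{i,j}|\leq|u_i|_2\leq\delta_i$. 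Each per-monomial product change is at most $(B_i+\delta_i)^i-B_i^i$; provided $\delta_i\leq B_i$, this simplifies to $O_d(B_i^{i-1}\delta_i)$. Summing over the $O_d(n_i)$ monomials with bounded coefficients yields
\[
|h_i(q_i)-h_i(P_i(x))|=O_d(n_iB_i^{i-1}\delta_i)=O_d\!\left(\frac{Dn_i^2 B_i^{i-1}}{C_i}\right).
\]

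\textbf{Choosing $D$.} Summing over $i$,
\[
|h(q)-h(P(x))|\leq O_d\!\left(D\sum_{i=1}^d \frac{n_i^2 B_i^{i-1}}{C_i}\right),
\]
valid under the regime constraint $\delta_i\leq B_i$, equivalently $D\leq O_d(B_iC_i/n_i)$ for every $i$. Imposing also the sign-preservation bound $|h(q)-h(P(x))|<|p(x)|$ requires $D\leq O_d\bigl(|p(x)|/\sum_i n_i^2B_i^{i-1}/C_i\bigr)$. Taking $D$ equal to the minimum of these two ceilings and feeding $D^{-2}$ into Lemma \ref{smoothErrorBoundLem} gives exactly
\[
|F(x)-\tilde F(x)|=O_d\!\left(\max\!\left\{\!\left(\frac{|p(x)|}{\sum_{i=1}^d n_i^2 B_i^{i-1}/C_i}\right)^{\!-2}\!,\ \max_i\!\left(\frac{B_iC_i}{n_i}\right)^{\!-2}\right\}\right),
\]
while the a priori $|F-\tilde F|\leq 2$ supplies the $O(1)$ cap.

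\textbf{Main obstacle.} The technical heart is the per-block estimate: exploiting the three structural features of $h_i$ from Proposition \ref{structureProp} (degree $i$, disjoint monomial supports, $\ell^2$-bounded coefficients) simultaneously with the $\ell^\infty$ bound $|P_{i,j}(x)|\leq B_i$ to extract a \emph{first-order} Lipschitz estimate $O_d(n_iB_i^{i-1}\delta_i)$ rather than a bound dominated by higher-order contributions. The transition point $\delta_i\approx B_i$---past which higher-order Taylor terms swamp the linear one---is precisely what produces the second argument of the max in the statement, and keeping the two constraints on $D$ cleanly separated is the bookkeeping that requires the most care.
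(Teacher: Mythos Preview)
Your proposal is correct and follows essentially the same route as the paper: choose $D$ as the minimum of a \emph{regime} constraint ($\delta_i\le B_i$, i.e.\ $D\lesssim B_iC_i/n_i$) and a \emph{sign-preservation} constraint ($|h(q)-h(P(x))|<|p(x)|$), then apply Lemma~\ref{smoothErrorBoundLem}. The only cosmetic difference is that the paper bounds $|h_i(q_i)-h_i(P_i(x))|$ via the mean value theorem, estimating the directional derivative of each monomial at an intermediate point $z$ (whose coordinates are at most $2B_i$ because $\delta_i\le B_i$), whereas you expand the multilinear increment explicitly and bound $(B_i+\delta_i)^i-B_i^i$; both yield the same $O_d(n_iB_i^{i-1}\delta_i)$ per block.
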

\begin{proof}
The bound of $O(1)$ follows immediately from Lemma \ref{smoothErrorBoundLem}.  For the other bound, let
$$
D = \min\left\{\frac{|p(x)|}{d2^d\sum_{i=1}^d n_i^2 B_i^{i-1}/C_i},\min_i\left\{\frac{B_iC_i}{n_i\sqrt{d}}\right\} \right\}.
$$
By Lemma \ref{smoothErrorBoundLem}, it suffices to show that for any $Q=(Q_1,\ldots,Q_n)\in\R^{n_1}\times\cdots\times\R^{n_d}$ so that $|Q_i-P_i(x)|\leq Dn_i\sqrt{d}/C_i$ that $h(P(x))=p(x)$ and $h(Q)$ have the same sign.  To do this, we write $h=h_1+\cdots+h_d$ and we note that
$$
|h(P(x))-h(Q)| \leq \sum_{i=1}^d |P_i(x)-Q_i||h_i'(z)|.
$$
Where $h_i'(z)$ is the directional derivative of $h_i$ in the direction from $P_i(x)$ to $Q_i$, and $z$ is some point along this line.  First, note that $|Q_i-P_i(x)|\leq B_i$.  Therefore, each coordinate of $z$ is at most $2B_i$.  Note that $h_i$ is a sum of at most $n_i$ monomials of degree $i$ with coefficients at most 1.  The derivative of each monomial at $z$ is at most $\sqrt{d}2^d B_i^{i-1}$.  Therefore, $|h_i'(z)|\leq \sqrt{d}2^d n_i B_i^{i-1}$.  Therefore,
\begin{align*}
|h(P(x))-h(Q)| & \leq \sum_{i=1}^d |P_i(x)-Q_i||h_i'(z)|\\
& \leq \sum_{i=1}^d (D n_i \sqrt{d} / C_i )(\sqrt{d}2^d n_i B_i^{i-1})\\
& \leq D \sum_{i=1}^d d2^d n_i^2 B_i^{i-1}/C_i\\
& \leq |h(P(x))|.
\end{align*}
Therefore $h(P(x))$ and $h(Q)$ have the same sign, so our bound follows by Lemma \ref{smoothErrorBoundLem}.
\end{proof}

We take this bound on the approximation error and prove the following Lemma on the error of expectations.

\begin{lem}\label{expectationErrorLem}
Let $Z$ be a random variable valued in $\R^n$.  Let $B_i>1$ be real numbers.  Let $M=\sum_{i=1}^d n_i^2 B_i^{i-1}/C_i$.  Then
\begin{align*}
& |\E[F(Z)] - \E[\tilde F(Z)]|  =\\
& O_d( \pr(\exists i,j:|P_{i,j}(Z)| > B_i)  +  M + \pr(|p(Z)|\leq \sqrt{M})).
\end{align*}
Furthermore,
\begin{align*}
\E[F(Z)] \leq & \E[\tilde F(Z)] \\ & + O_d( \pr(\exists i,j :|P_{i,j}(Z)| > B_i)  +  M) \\ & + 2\pr(-\sqrt{M}<p(Z)<0),
\end{align*}
and
\begin{align*}
\E[F(Z)] \geq & \E[\tilde F(Z)] \\ & + O_d( \pr(\exists i,j:|P_{i,j}(Z)| > B_i)  +  M)\\ & - 2\pr(0<p(Z)<\sqrt{M}).
\end{align*}
\end{lem}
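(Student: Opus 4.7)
The plan is to bound $|\E[F(Z)] - \E[\tilde F(Z)]| \leq \E[|F(Z) - \tilde F(Z)|]$ by partitioning the sample space and applying the pointwise estimate of Lemma \ref{near0boundLem} piecewise. Let $E$ denote the ``bad'' event $\{\exists i,j : |P_{i,j}(Z)| > B_i\}$, on which the hypothesis of that lemma fails; here the trivial bound $|F - \tilde F| \leq 2$ contributes $O(\pr(E))$ to every one of the three claims. On $E^c$ Lemma \ref{near0boundLem} yields the pointwise bound $|F(Z) - \tilde F(Z)| = O_d(\max\{(M/|p(Z)|)^2, \max_i (n_i/(B_iC_i))^2\})$, and I would further split $E^c$ according to whether $|p(Z)| > \sqrt{M}$ or $|p(Z)| \leq \sqrt{M}$.

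On $E^c \cap \{|p(Z)| > \sqrt{M}\}$ I would dyadically decompose into bands $\{2^k \sqrt{M} < |p(Z)| \leq 2^{k+1}\sqrt{M}\}$ for $k \geq 0$. On the $k$-th band the first term of the pointwise estimate is $O_d(M \cdot 4^{-k})$, and since each band has probability at most $1$, summing over $k$ yields an absolutely convergent geometric series with total contribution $O_d(M)$. The second (constant-in-$Z$) term $\max_i(n_i/(B_iC_i))^2$ is absorbed into $M$ by comparing it to the summand $n_i^2 B_i^{i-1}/C_i$ in the definition of $M$ (using the hypothesis $B_i > 1$). On the remaining region $E^c \cap \{|p(Z)| \leq \sqrt{M}\}$ I would fall back on the trivial $|F - \tilde F| \leq 2$, which gives the $O(\pr(|p(Z)| \leq \sqrt{M}))$ term appearing in the two-sided bound.

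For the one-sided bounds I would refine the near-boundary analysis using the sign of $F$: since $F = \sgn(p) \in \{\pm 1\}$ and $\tilde F \in [-1,1]$, $(F - \tilde F)^+$ vanishes on $\{F = -1\} = \{p < 0\}$ and $(F - \tilde F)^-$ vanishes on $\{F = +1\} = \{p > 0\}$. So only the half of $\{|p(Z)| \leq \sqrt{M}\}$ whose sign matches the direction of the one-sided bound actually contributes, with factor at most $2$; this is what produces the $2\pr(\cdot)$ terms in the statement, with $\sqrt{M}$ as the threshold. The main technical obstacle is the dyadic summation step of the second paragraph, which works precisely because the $1/|p|^2$ decay in Lemma \ref{near0boundLem} gives a convergent geometric series and no additional anticoncentration hypothesis on $p(Z)$ is needed; the remaining bookkeeping is just confirming that the ``max'' inside the pointwise bound is dominated by $M$.
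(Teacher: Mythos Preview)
Your proposal is correct and follows essentially the same partition-and-apply-Lemma~\ref{near0boundLem} structure as the paper's proof. The one redundancy is your dyadic decomposition of $\{|p(Z)| > \sqrt{M}\}$: since $(M/|p(Z)|)^2 \le M$ uniformly on that entire set (and, as both you and the paper observe, the other term in the $\max$ is at most $M^2 \le M$ once one assumes $M\le 1$ without loss of generality), the paper simply bounds the contribution there by $O_d(M)\cdot 1$ in a single step, with no geometric sum needed.
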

\begin{proof}
We note that $|F(Z) - \tilde F(Z)| = O(1)$.  Also note that $\frac{1}{M} \leq \frac{B_iC_i}{n_i}$ for all $i$.  The first inequality follows by noting that Lemma \ref{near0boundLem} implies that unless $|P_{i,j}(Z)|>B_i$ for some $i,j$ that the following hold:
\begin{enumerate}
\item If $|p(z)|< \sqrt{M}$, $|F(Z) - \tilde F(Z)| \leq 2$.
\item If $|p(z)| \geq \sqrt{M}$, $|F(Z) - \tilde F(Z)| = O_d(M)$.
\end{enumerate}
The other two inequalities follow from noting that if $p(z)<0$, then $F(Z)\leq\tilde F(Z)$ and if $p(Z)>0$ then $F(Z)\geq \tilde F(Z)$.
\end{proof}

We are almost ready to prove the first of our approximation results, but we first need a theorem on the anticoncentration of Gaussian polynomials.  In particular a consequence of \cite{anticoncentration} Theorem 8 is:

\begin{thm}[Carbery and Wright]\label{anticoncentrationTheorem}
Let $p$ be a degree $d$ polynomial, and $Y$ a standard Gaussian.  Suppose that $\E[p(Y)^2]=1$.  Then, for $\epsilon>0$,
$$
\pr(|p(Y)|<\epsilon) = O(d \epsilon^{1/d}).
$$
\end{thm}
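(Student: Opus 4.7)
The plan is to prove this anticoncentration bound via the Carbery--Wright approach, which exploits the log-concavity of the Gaussian density. The strategy reduces the $n$-dimensional statement to a one-dimensional integral inequality.

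First, I would reduce to dimension one. By applying a random orthogonal transformation (which preserves the standard Gaussian) and then conditioning on all but one coordinate via Fubini, it suffices to prove the same bound for a degree-$d$ real polynomial $q$ on $\R$ under the one-dimensional standard Gaussian $\gamma$, normalized so that $\int q^2\, d\gamma = 1$. Some care is required in the averaging: one shows that for a random slicing direction, the conditional $L^2$ norm of $q$ restricted to a generic line is concentrated around the global $L^2$ norm up to a constant depending only on $d$, so the one-dimensional bound descends to the multivariate case without loss in the $\epsilon^{1/d}$ rate.

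Second, in dimension one I would establish the stronger integral inequality
\[
\E[|q(Y)|^{-1/d}] = O(d),
\]
which by Markov's inequality applied to $|q(Y)|^{-1/d}$ immediately yields
\[
\pr(|q(Y)|<\epsilon) \leq \epsilon^{1/d}\,\E[|q(Y)|^{-1/d}] = O(d\epsilon^{1/d}).
\]
This reduction is very clean because it converts a tail-probability statement into a global integral bound that can be attacked analytically.

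The main obstacle is establishing this one-dimensional integral inequality. A naive Hölder bound obtained by factoring $q = c\prod_j (y-r_j)$ over $\C$ and treating each linear factor separately fails: the leading coefficient $c$ can be exponentially small in $d$ relative to $\|q\|_2$ (take $q$ proportional to $(y-R)^d$ for large $R$), so $|c|^{-1/d}$ need not be bounded independent of $d$. The correct argument, carried out in Theorem~8 of \cite{anticoncentration}, inducts on the degree: on each of the at most $d$ intervals where $q$ is monotone, one changes variables via $u = q(y)$ and estimates the contribution to $\int |q|^{-1/d}\,d\gamma$ by $O(1)$ using the fact that the pushed-forward density of $\gamma$ under a monotone polynomial map remains log-concave on the image interval. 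Summing the $O(d)$ interval contributions yields the linear-in-$d$ constant. The technical heart is thus purely real-analytic and specific to one variable; the rest of the proof is bookkeeping.
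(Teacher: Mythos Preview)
The paper does not prove this theorem; it is quoted from \cite{anticoncentration} (their Theorem~8) and used as a black box, so there is no proof in the paper to compare against.

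That said, your sketch contains a concrete error. The integral inequality you propose,
\[
\E\bigl[|q(Y)|^{-1/d}\bigr] = O(d),
\]
is false in general. Take $q(y) = c\, y^d$ with $c$ chosen so that $\E[q(Y)^2]=1$. Then $|q(Y)|^{-1/d} = c^{-1/d}|Y|^{-1}$, and $\E[|Y|^{-1}] = \int_{\R} |y|^{-1}\phi(y)\,dy$ diverges at $y=0$ (here $\phi$ is the Gaussian density). So the negative moment is infinite even though the anticoncentration bound $\pr(|q(Y)|<\epsilon)=\pr\bigl(|Y|<(\epsilon/c)^{1/d}\bigr)=O(\epsilon^{1/d})$ certainly holds. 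Your interval-by-interval description inherits the same divergence: on a monotonicity interval containing a zero of $q$ of multiplicity $d$, the change of variables $u=q(y)$ produces an integrand behaving like $u^{-1}$ near $u=0$, which is not integrable, so the ``$O(1)$ per interval'' claim is wrong. The actual Carbery--Wright argument proves the weak-type (distributional) inequality directly rather than passing through a strong-type bound on $\E[|q|^{-1/d}]$.

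Your dimension-reduction step is also shakier than you indicate. After conditioning on $Y'$, the conditional bound involves $\|p(\cdot,Y')\|_2^{-1/d}$, and controlling $\E_{Y'}\bigl[\|p(\cdot,Y')\|_2^{-1/d}\bigr]$ is itself an anticoncentration statement for the degree-$2d$ polynomial $r(Y')=\|p(\cdot,Y')\|_2^2$. Feeding the target inequality back in via the layer-cake formula gives a divergent integral, so the induction as you describe it does not close; Carbery and Wright organize the reduction differently.
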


We are now prepared to prove our approximation result.

\begin{prop}\label{yanitconcentrationErrorProp}
Let $p,F,\tilde F,h,m_i,n_i,C_i$ be as above and let $\epsilon>0$.  Let $B_i = \Omega_d (\sqrt{\log(n_i/\epsilon)})$ be some real numbers.  Suppose that $m_i > B_i^2$ and that $C_i = \Omega_d (n_i^2 B_i^{i-1} \epsilon^{-2d})$ for all $i$.  Then, if the implied constants for the bounds on $B_i$ and $C_i$ are large enough,
$$
|\E[F(Y)] - \E[\tilde F(Y)]| = O(\epsilon).
$$
\end{prop}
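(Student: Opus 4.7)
The plan is to apply Lemma \ref{expectationErrorLem} with $Z = Y$, which reduces the task to bounding each of
$$\pr(\exists i,j : |P_{i,j}(Y)| > B_i), \quad M, \quad \pr(|p(Y)| \leq \sqrt{M})$$
by $O(\epsilon)$, where $M = \sum_{i=1}^d n_i^2 B_i^{i-1}/C_i$. The three terms are handled in turn, using the hypothesis on $C_i$, the moment bounds from Proposition \ref{structureProp}, and the Carbery--Wright anticoncentration inequality, respectively.

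The middle term is immediate from the hypothesis $C_i = \Omega_d(n_i^2 B_i^{i-1}\epsilon^{-2d})$: each summand in $M$ is $O_d(\epsilon^{2d})$, so $M = O_d(\epsilon^{2d})$, which is certainly $O(\epsilon)$, and moreover $\sqrt{M} = O_d(\epsilon^d)$, an estimate that feeds into the anticoncentration step below.

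For the tail probability, I would apply Markov's inequality to $|P_{i,j}(Y)|^k$ with $k$ chosen as a small constant multiple of $B_i^2$. Item 6 of Proposition \ref{structureProp} provides $\E[|P_{i,j}(Y)|^k] = O_d(\sqrt{k})^k$ for $k \leq m_i$, and the hypothesis $m_i > B_i^2$ with a sufficiently large implied constant guarantees such a $k$ is in the allowed range. Choosing the constant so that $O_d(\sqrt{k})/B_i \leq 1/e$, Markov gives $\pr(|P_{i,j}(Y)| > B_i) \leq e^{-\Omega_d(B_i^2)}$. The lower bound $B_i = \Omega_d(\sqrt{\log(n_i/\epsilon)})$, again with a sufficiently large constant, then makes this at most $O(\epsilon/n_i)$, and a union bound over the $n_i$ polynomials in each of the $d$ classes yields $O(\epsilon)$.

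For the anticoncentration term, since $p$ is multilinear with sum of squares of coefficients equal to $1$, orthogonality of multilinear monomials under the Gaussian measure gives $\E[p(Y)^2] = 1$, so Theorem \ref{anticoncentrationTheorem} directly yields $\pr(|p(Y)| < \sqrt{M}) = O(d\cdot M^{1/(2d)}) = O_d(\epsilon)$ by the earlier estimate on $M$. Combining the three bounds finishes the proof. The main piece of care is parameter matching: $m_i$ must exceed $B_i^2$ by a constant large enough to beat the constant in the moment bound (so that $e^{-\Omega_d(B_i^2)}$ decay is actually obtained), and the exponent $-2d$ in the hypothesis on $C_i$ is precisely calibrated to survive the $1/(2d)$ power loss from Carbery--Wright.
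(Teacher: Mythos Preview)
Your proposal is correct and follows essentially the same route as the paper: invoke Lemma \ref{expectationErrorLem}, bound $M$ directly from the hypothesis on $C_i$, control the tail probability $\pr(|P_{i,j}(Y)|>B_i)$ via Markov on a moment of order comparable to $B_i^2 \approx \log(n_i/\epsilon)$ followed by a union bound, and finish with Carbery--Wright for the anticoncentration term. The only cosmetic difference is that the paper fixes the moment order as $k=\log(dn_i/\epsilon)$ rather than as a constant multiple of $B_i^2$, but these amount to the same choice under the hypothesis on $B_i$.
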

\begin{proof}
We bound the error using Lemma \ref{expectationErrorLem}.  We note that the probability that $|P_{i,j}(Y)|\geq B_i$ can be bounded by looking at the $\log(d n_i/\epsilon)=k^{th}$ moment, yielding a probability of $\frac{O_d(\sqrt{k})^k}{B_i^k} \leq e^{-k} = \frac{\epsilon}{dn_i}$.  Taking a union bound over all $j$ gives a probability of $\frac{\epsilon}{d}$.  Taking a union bound over $i$ yields a probability of at most $\epsilon$.

Next we note that
$$
M = \sum_{i=1}^d n_i^2 B_i^{i-1}/ C_i = O_d(\epsilon^{2d}).
$$
Hence if our constants were chosen to be large enough, by Theorem \ref{anticoncentrationTheorem}
$$
\pr(|p(Y)|<\sqrt{M}) = O(\epsilon).
$$
This proves our result.
\end{proof}

If we could prove Proposition \ref{yanitconcentrationErrorProp} for $X$ instead of $Y$, we would be done.  Unfortunately, Theorem \ref{anticoncentrationTheorem} does not immediately apply for families that are merely $k$-independent.  Fortunately, we can work around this to prove Proposition \ref{mainProp}.  In particular, we will use the inequality versions of Lemma \ref{expectationErrorLem} to obtain upper and lower bounds on $\E[F(X)]$ in terms of $\E[\sgn(p(Y)+c)]$, and make use of anticoncentration for $p(Y)$.

\begin{proof}[Proof of Proposition \ref{mainProp}]
Let $B_i = \Omega_d(\sqrt{\log(1/\epsilon)})$ with sufficiently large constants.  Define $m_i$ and $C_i$ so that $C_i = \Omega_d\left(\left(\prod_{j=1}^{i-1}m_j\right)^2 B_i^{i-1} \epsilon^{-2d}\right)$ and $m_i \geq \Omega_d\left(\left(\prod_{j=1}^{i-1}m_j\right)C_i^2\right),\log(2^d/\epsilon),B_i^2$, all with sufficiently large constants.  Note that this is achievable by setting $C_i = \Omega_d\left( \epsilon^{-7^i d}\right)$, $m_i = \Omega_d\left(\epsilon^{-3\cdot7^i d} \right)$.  Let $k=d\max_i m_i$.  Note $k$ can be as small as $O_d(\epsilon^{-4d\cdot 7^d})$.  Using these parameters, define $n_i,h_i,P_{i,j},f,\tilde f,\tilde F$ as described above.  Note that since $n_i = O_d\left(\prod_{j=1}^{i-1} m_i\right)$ that $C_i = \Omega_d(n_i^2 B_i^{i-1} \epsilon^{-2d})$ and $m_i = \Omega_d (n_i C_i^2)$.  Therefore, for $Y$ a family of independent standard Gaussians and $X$ a family of $k$-independent standard Gaussians, Propositions \ref{taylorErrorProp} and \ref{yanitconcentrationErrorProp} imply that
$$
\E[F(Y)] \approx_\epsilon \E[\tilde F(Y)] \approx_\epsilon \E[\tilde F(X)].
$$
We note that the $M$ in Lemma \ref{expectationErrorLem} is $O_d(\epsilon^{2d})$ with sufficiently small constant.  Therefore, by Lemma \ref{expectationErrorLem} $|\E[F(X)] -  \E[\tilde F(X)]|$ is at most
\begin{align*}
O(\epsilon) + 2\pr(|p(X)|<O_d(\epsilon^{d})) + \pr(\exists i,j:|P_{i,j}(X)| > B_i ).
\end{align*}
We note that by looking at the $\log(d n_i/\epsilon)$ moments of the $P_{i,j}$ that the last probability is $O(\epsilon)$.  Therefore, combining this with the above we get that
$$
\E[F(X)] \geq \E[F(Y)]+O(\epsilon) - 2\pr(0<p(X)<O_d(\epsilon^d)),
$$
and
$$
\E[F(X)] \leq \E[F(Y)]+O(\epsilon) + 2\pr(-O_d(\epsilon^d)<p(X)<0).
$$
But this implies that
$$
\E[\sgn(p(X)-O_d(\epsilon^d))] \leq \E[F(Y)] +O(\epsilon),
$$
and
$$
\E[\sgn(p(X)+O_d(\epsilon^d))] \geq \E[F(Y)] +O(\epsilon).
$$
On the other hand, applying to above to the polynomials $p\pm O_d(\epsilon^d)$,
\begin{align*}
\E[\sgn(p(Y)-O_d(\epsilon^d))] & +O(\epsilon) \leq \E[F(X)] \\ & \leq \E[\sgn(p(Y)+O_d(\epsilon^d))]+O(\epsilon).
\end{align*}
But we have that
$$
\E[\sgn(p(Y)-O_d(\epsilon^d))]\leq \E[F(Y)] \leq \E[\sgn(p(Y)+O_d(\epsilon^d))].
$$
Furthermore, $\sgn(p(Y)-O_d(\epsilon^d))$ and $\sgn(p(Y)+O_d(\epsilon^d))$ differ by at most 2, and only when $|p(Y)| = O_d(\epsilon^d)$.  By Theorem \ref{anticoncentrationTheorem}, this happens with probability $O_d(\epsilon)$.  Therefore, we have that all of the expectations above are within $O_d(\epsilon)$ of $\E[F(Y)]$, and hence $\E[F(X)] = \E[F(Y)] + O_d(\epsilon)$.  Decreasing the value of $\epsilon$ by a factor depending only on $d$ (and increasing $k$ by a corresponding factor) yields our result.
\end{proof}

\section{General Polynomials}\label{ReductionSec}

We have proved our Theorem for multilinear polynomials, but would like to extend it to general polynomials.  Our basic idea will be to show that a general polynomial is approximated by a multilinear polynomial in perhaps more variables.

\begin{lem}\label{multilinearLem}
Let $p$ be a degree $d$ polynomial and $\delta>0$.  Then there exists a multilinear degree $d$ polynomial $p_\delta$ (in perhaps a greater number of variables) so that for every $k$-independent family of random Gaussians $X$, there is a (correlated) $k$-independent family of random Gaussians $\tilde X$ so that
$$
\pr(|p(X)-p_\delta(\tilde X)|>\delta) < \delta.
$$
\end{lem}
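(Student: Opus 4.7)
The plan is to use the standard ``dilution'' trick: replace each Gaussian $X_i$ by $N$ rescaled copies whose normalized sum recovers $X_i$, then keep only the multilinear part of $p$ in the enlarged variable set, using Hermite polynomials to control the non-multilinear discard.

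To construct $\tilde X$, I would take auxiliary iid standard Gaussians $Z_{i,j}$ for $1\leq i\leq n$, $1\leq j\leq N$, drawn independently of $X$, and set $Y_{i,j} = X_i/\sqrt{N} + Z_{i,j} - \bar Z_i$ where $\bar Z_i = \frac{1}{N}\sum_j Z_{i,j}$. A direct covariance calculation shows that for each fixed $i$ the vector $(Y_{i,j})_{j}$ has identity covariance (so its coordinates are marginally iid standard Gaussian), while $\sum_j Y_{i,j}/\sqrt{N} = X_i$ identically. The family $\tilde X=(Y_{i,j})$ is $k$-independent: any $k$ of its coordinates involve at most $k$ distinct underlying $X_i$, which are fully independent by $k$-independence of $X$; combining this with the full independence of the $Z_{i,j}$ forces the chosen $k$ coordinates to be iid standard Gaussians.

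For $p_\delta$, I would expand $p$ in the tensor Hermite basis as $p(X) = \sum_\beta d_\beta \prod_i H_{\beta_i}(X_i)$, where $H_a$ denotes the probabilists' Hermite polynomial. The technical heart is the chaos identity
\[
H_a\!\left(\tfrac{1}{\sqrt{N}}\textstyle\sum_{j=1}^N Y_j\right) = \frac{a!}{N^{a/2}}\, e_a(Y_1,\ldots,Y_N) + R_a(Y),\qquad \E[R_a(Y)^2] = O_a(1/N),
\]
where $e_a$ is the elementary symmetric polynomial of degree $a$. This follows from the generating-function expansion $H_a(\sum Y_j/\sqrt N) = N^{-a/2}\sum_{\alpha_1+\cdots+\alpha_N=a}\tfrac{a!}{\prod \alpha_j!}\prod_j H_{\alpha_j}(Y_j)$: the $\alpha_j\in\{0,1\}$ part contributes exactly $\tfrac{a!}{N^{a/2}}e_a$, while the non-multilinear part (some $\alpha_j\geq 2$) is supported on at most $a-1$ coordinates, so by orthogonality $\prod \E[H_{\alpha_j}(Y_j)^2]=\prod \alpha_j!$ the sum of squared coefficients is $O_a(N^{a-1}/N^a)=O_a(N^{-1})$. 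I then define
\[
p_\delta(\tilde X) = \sum_\beta d_\beta \prod_i \frac{\beta_i!}{N^{\beta_i/2}}\, e_{\beta_i}(Y_{i,1},\ldots,Y_{i,N}),
\]
a multilinear polynomial of degree at most $d$ in the $Y_{i,j}$.

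For the error bound I would expand $\prod_i H_{\beta_i}(X_i) - \prod_i \tfrac{\beta_i!}{N^{\beta_i/2}}e_{\beta_i}(Y_{i,\cdot})$ as a telescoping sum over non-empty subsets of blocks, each summand containing at least one $R_{\beta_i}$ factor. Provided $k\geq 2d$ (automatic in the paper's regime), expectations of polynomials of degree $\leq 2d$ in the $Y_{i,j}$ match those under full independence, and in particular the blocks $(Y_{i,\cdot})_i$ decouple across $i$ when computing second moments. Each summand then contributes $O_d(1/N)$ to the variance, giving $\E[(p(X)-p_\delta(\tilde X))^2] = O(C(p,d)/N)$ for some constant depending on the coefficients of $p$. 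Choosing $N$ large enough depending on $p$, $d$, $\delta$ and applying Chebyshev yields $\pr(|p(X)-p_\delta(\tilde X)|>\delta)<\delta$. The main obstacle is establishing the chaos identity with its $O(1/N)$ error cleanly enough to survive being multiplied together across $n$ blocks; once that is in place the rest is routine bookkeeping.
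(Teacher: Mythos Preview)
Your approach is correct and shares with the paper the same underlying dilution construction: replace each $X_i$ by a block of $N$ Gaussians summing (after rescaling) to $X_i$, verify that the enlarged family remains $k$-independent, and take for $p_\delta$ the multilinear part of $p$ in the new variables. Where you diverge from the paper is in how the multilinear part is extracted and how the error is controlled. The paper works monomial by monomial: it expands $X_i^a = N^{-a/2}\bigl(\sum_j X_{i,j}\bigr)^a$, observes that $\sum_j (X_{i,j}/\sqrt N)^2 \to 1$ and $\sum_j (X_{i,j}/\sqrt N)^a \to 0$ for $a\ge 3$ in probability as $N\to\infty$, and uses these law-of-large-numbers facts to replace each non-multilinear piece by a lower-degree multilinear one with small pointwise error on a high-probability event, finishing with a union bound over blocks and monomials. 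You instead pass through the Hermite basis and use the addition formula to identify the multilinear projection of $H_a\bigl(\sum_j Y_j/\sqrt N\bigr)$ exactly as $a!\,N^{-a/2}e_a$, then bound the remainder in $L^2$ via chaos orthogonality and conclude with Chebyshev. Your route is cleaner and more structural, and your resulting $p_\delta$ is the exact Wiener-chaos multilinear projection rather than the paper's somewhat ad hoc substitute; the price is that computing the second moment of the error requires $2d$-wise independence of $\tilde X$, so your argument assumes $k\ge 2d$, whereas the paper's pointwise bound uses only marginal Gaussianity within each block and holds for every $k$. Since in the application $k$ is enormous this is, as you note, harmless.
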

\begin{proof}
We will pick some large integer $N$ (how large we will say later).  If $X=(X_1,\ldots,X_n)$, we let $\tilde X = (X_{i,j}), 1\leq i\leq n, 1\leq j\leq N$.  For fixed $i$ we let the collection of $X_{i,j}$ be the standard collection of $N$ standard Gaussians subject to the condition that $X_i = \frac{1}{\sqrt{N}}\sum_{j=1}^N X_{i,j}$.  Equivalently, $X_{i,j} = \frac{1}{\sqrt{N}}X_i + Y_{i,j}$ where the $Y_{i,j}$ are Gaussians with variance $1-1/N$ and covariance $-1/N$ with each other.

$\tilde X$ is $k$-independent because given any $i_1,\ldots,i_k$, $j_1,\ldots,j_k$ we can obtain the $X_{i_\ell,j_\ell}$ by first picking the $X_{i_\ell}$ randomly and independently, and picking the $Y_{i_\ell,j_\ell}$ independently of those.  But we note that this yields the same distribution we would get by setting all of the $X_{i_\ell,k}$ to be random independent Gaussians, and letting $X_i=\frac{1}{\sqrt{N}}\sum_{j=1}^N X_{i,j}$.

We now need to construct $p_\delta$ with the appropriate property.  The idea will be to replace each term $X_i^k$ in each monomial in $p$ with some degree $k$ polynomial in the $X_{i,j}$.  This will yield a multilinear degree $d$ polynomial in $\tilde X$.  We will want this new polynomial to be within $\delta'$ of $X_i^k$ with probability $1-\delta'$ for $\delta'$ some small positive number depending on $p$ and $\delta$.  This will be enough since if $\delta'<\delta/(2dn)$ the approximation will hold for all $i,k$ with probability at least $1-\delta/2$.  Furthermore with probability $1-\delta/2$, each of the $|X_i|$ will be at most $O(\log(n/\delta))$.  Therefore if this holds and each of the replacement polynomials is off by at most $\delta'$, then the value of the full polynomial will be off by at most $O(\log^d(n/\delta)\delta')$ times the sum of the coefficients of $p$.  Hence if we can achieve this for $\delta'$ small enough we are done.

Hence, we have reduced our problem to the case of $p(X) = p(X_1) = X_1^d$.  For simplicity of notation, we use $X$ instead of $X_1$ and $X_j$ instead of $X_{1,j}$.  We note that
$$
X^d = N^{-d/2}\left(\sum_{i=1}^N X_i\right)^d.
$$
Unfortunately, this is not a multilinear polynomial in the $X_i$.  Fortunately, it almost is.  Expanding it out and grouping terms based on the multiset of exponents occurring in them we find that
$$
X^d = N^{-d/2}\sum_{\substack{a_1\leq \ldots \leq a_k \\ \sum a_i=d}} \binom{d}{a_1,a_2,\ldots,a_k} \sum_{\mathcal{S}} \prod_{j=1}^k X_{i_j}^{a_j}.
$$
Where $S$ is the set of $i_1,\ldots,i_k\in\{1,\ldots,N\}$ distinct so that $i_j < i_{j+1}$ if $a_j = a_{j+1}$.
Letting $b_\ell$ be the number of $a_i$ that are equal to $\ell$ we find that this is
$$
N^{-d/2}\sum_{\substack{a_1\leq \ldots \leq a_k \\ \sum a_i=d}} \binom{d}{a_1,a_2,\ldots,a_k} \prod_\ell \frac{1}{b_\ell!}\sum_{\substack{i_1,\ldots,i_k\in [N] \\ i_j \ \textrm{ distinct}}} \prod_{j=1}^k X_{i_j}^{a_j}.
$$
Or rewriting slightly, this is
$$
\sum_{\substack{a_1\leq \ldots \leq a_k \\ \sum a_i=d}} \binom{d}{a_1,a_2,\ldots,a_k} \prod_\ell \frac{1}{b_\ell!}\sum_{\substack{i_1,\ldots,i_k\in [N] \\ i_j \ \textrm{ distinct}}} \prod_{j=1}^k \left(\frac{X_{i_j}}{\sqrt{N}}\right)^{a_j}.
$$
Now, with probability $1-\delta$, $\left|\sum_i \frac{X_i}{\sqrt{N}} \right| = O(\log(1/\delta))$.  Furthermore with probability tending to 1 as $N$ goes to infinity, $\left(\sum_i \left(\frac{X_i}{\sqrt{N}}\right)^2 \right) = 1 + O(\delta / \log^d(1/\delta)),$ and $\left(\sum_i \left(\frac{X_i}{\sqrt{N}}\right)^a \right) = O(\delta / \log^d(1/\delta))$ for each $3\leq a \leq d$.  If all of these events hold, then each term in the above with some $a_j>2$ will be $O(\delta)$, and any terms with some $a_j=2$ will be within $O(\delta)$ of
$$
\sum_{\substack{i_1,\ldots,i_k'\in \{1,\ldots,N\} \\ i_j \ \textrm{ distinct}}} \prod_{j=1}^{k'} \frac{X_{i_j}}{\sqrt{N}}
$$
where $k'$ is the largest $j$ so that $a_j=1$.  This gives a multilinear polynomial, that with probability $1-\delta$ is within $O_d(\delta)$ of $p(X)$.  Perhaps decreasing $\delta$ to deal with the constant in the $O_d$ yields our result.
\end{proof}

We can now prove Theorem \ref{mainTheorem}.

\begin{proof}[Proof of Theorem \ref{mainTheorem}]
Let $p$ be a normalized degree $d$ polynomial. Let $k$ be as required by Proposition \ref{mainProp}.  Let $Y$ be a family of independent standard Gaussians and $X$ a $k$-independent family of standard Gaussians.  Fix $\delta=(\epsilon/d)^d$.  Let $p_\delta,\tilde X, \tilde Y$ be as given by Lemma \ref{multilinearLem}.  We need to show that $\pr(p(X)>0) = \pr(p(Y)>0)+O(\epsilon).$  By construction of $p_\delta$,
$$
\pr(p(X)>0) \geq \pr(p_\delta(\tilde X)>\delta) - \delta.
$$
Applying Proposition \ref{mainProp} to the multilinear polynomial $p_\delta-\delta$, this is at least
$$
\pr(p_\delta(\tilde Y)>\delta) + O(\epsilon).
$$
Since $\tilde Y$ is $\ell$-independent for all $\ell$ (since $Y$ is), it is actually an independent family of Gaussians.  Therefore by Theorem \ref{anticoncentrationTheorem}, $\pr(|p(Y)|<\delta) = O(d\delta^{1/d}) = O(\epsilon)$.  Hence
$$
\pr(p(X)>0) \geq \pr(p_\delta(\tilde Y) > -\delta) + O(\epsilon).
$$
Noting that with probability $1-\delta$ that $p_\delta(\tilde Y)$ is at most $\delta$ less than $p(Y)$, this is at least
$$
\pr(p(Y)>0) + O(\epsilon).
$$
So
$$
\pr(p(X)>0) \geq \pr(p(Y)>0) + O(\epsilon).
$$
Similarly,
$$
\pr(p(X)<0) \geq \pr(p(Y)<0) + O(\epsilon).
$$
Combining these we clearly have
$$
\pr(p(X)>0) = \pr(p(Y)>0) + O(\epsilon)
$$
as desired.
\end{proof}

\section{Fooling PTFs of Bernoulli Random Variables}

Theorem \ref{mainTheorem} should also hold when $X$ is a $k$-independent family of Bernoulli random variables and $Y$ is a fully independent family of Bernoulli random variables.  The proof is essentially the same as in the Gaussian case with a few minor changes that need to be made.  In particular, the following steps do not carry over immediately:
\begin{enumerate}
\item The reduction from the case of a general polynomial to that of a multilinear polynomial
\item Theorem \ref{anticoncentrationTheorem} does not hold for Bernoulli random variables
\item Theorem \ref{LatalaThm} is not stated for the Bernoulli case
\end{enumerate}

The first of these problems is even easier to deal with in the Bernoulli case than in the Gaussian case.  This is because any degree-$d$ polynomial is equal to some degree-$d$ multilinear polynomial on the hypercube.

The second of these problems can be dealt with by fairly standard means.  In particular, the Invariance Principle of \cite{MOO} implies that for sufficiently regular polynomials, $p$, that $p(X)$ is anticoncentrated even for $X$ a Bernoulli random variable.  We are still left with the problem of reducing ourselves to the case of a regular polynomial.  This would be done using a regularity Lemma similar to that proven in \cite{regularity}, showing that an arbitrary polynomial threshold function can be written as a decision tree on a small number of coordinates such that most of the leaves are approximated by regular polynomial threshold functions.  Given a slight modification of this result telling us that these ``approximations'' hold even on $k$-independent inputs would allow us to reduce to the case of a regular polynomial after determining the values of $O_d(\epsilon^{-O(d)})$ coordinates.

The last of these concerns is apparently more significant, but can be dealt with by proving that Theorem \ref{LatalaThm} does hold for polynomials of Bernoullis.  In particular, one can show that a higher moment of a polynomial with respect to the Bernoulli distribution can be bounded in terms of the corresponding moment with respect to the Gaussian distribution.  In particular, we show that:
\begin{lem}
Let $p$ be a homogeneous degree-$d$ multilinear polynomial and $k\geq 1$.  Let $X$ be a Bernoulli random variable and $Y$ a Gaussian random variable.  Then
$$
\E[|p(X)|^k] = O(1)^{dk}\E[|p(Y)|^k].
$$
\end{lem}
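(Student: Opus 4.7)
The plan is to compare Gaussian and Bernoulli moments by exploiting the polar decomposition of a standard normal. Writing $Y_i = \epsilon_i Z_i$ with $\epsilon_i := \sgn(Y_i)$ and $Z_i := |Y_i|$, the symmetry of the Gaussian density makes $\epsilon_i$ independent of $Z_i$. The joint law of $(\epsilon_1,\ldots,\epsilon_n)$ is exactly that of the Bernoulli vector $X$, while the $Z_i$ are i.i.d.\ half-Gaussians with positive mean $\mu := \E[Z_1] = \sqrt{2/\pi}$. The whole proof will then be a one-line Jensen argument, using the homogeneity of $p$ in an essential way.

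The first step is to rewrite $p(Y)$ in these coordinates. Since $p$ is homogeneous multilinear of degree $d$, every monomial has exactly $d$ factors, so
\[ p(Y) \;=\; \sum_{|S|=d} a_S \prod_{i\in S}(\epsilon_i Z_i) \;=\; \sum_{|S|=d} a_S\, W_S(Z)\, \chi_S(\epsilon), \]
where $W_S(Z) := \prod_{i\in S} Z_i$ and $\chi_S(\epsilon) := \prod_{i\in S}\epsilon_i$. Note that $\sum_{|S|=d} a_S \chi_S(\epsilon) = p(\epsilon)$, i.e., $p$ evaluated on the Rademacher vector equals $p(X)$ in distribution.

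The second and main step is to apply Jensen's inequality to $Z$ with $\epsilon$ held fixed. Since the $Z_i$ are independent, $\E_Z[W_S(Z)] = \mu^{|S|} = \mu^d$ for every $S$ of size $d$; this is exactly where homogeneity is used, since it makes the scale factor $\mu^{|S|}$ the same $\mu^d$ for every monomial. Thus $\E_Z[p(Y)] = \mu^d\, p(\epsilon)$, and convexity of $t\mapsto|t|^k$ yields
\[ \E_Z\bigl[|p(Y)|^k\bigr] \;\geq\; \bigl|\E_Z[p(Y)]\bigr|^k \;=\; \mu^{dk}\, |p(\epsilon)|^k. \]
Averaging over the independent $\epsilon$ and using Fubini gives $\E[|p(Y)|^k] \geq \mu^{dk}\, \E[|p(X)|^k]$, which rearranges to $\E[|p(X)|^k] \leq \mu^{-dk}\, \E[|p(Y)|^k] = (\pi/2)^{dk/2}\, \E[|p(Y)|^k]$, of the claimed form $O(1)^{dk}\, \E[|p(Y)|^k]$.

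I do not foresee any real obstacle. The one potentially delicate point -- the collapse of the monomial-dependent factors $\mu^{|S|}$ to a single constant $\mu^d$ that can be pulled outside Jensen -- is exactly what the homogeneity hypothesis of the lemma is tailored to handle; an inhomogeneous variant would require first splitting $p$ into its homogeneous components and handling each separately.
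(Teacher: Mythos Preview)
Your proof is correct and is essentially the same argument as the paper's: decompose each Gaussian coordinate as a Rademacher sign times an independent half-Gaussian magnitude, apply Jensen's inequality over the magnitudes with the signs held fixed, and use homogeneity so that the factor $\E|Y_i|=\sqrt{2/\pi}$ appears exactly $d$ times in every monomial. The only cosmetic difference is that the paper starts from independent $\sigma$ and $G$ and observes that $\sigma_i|g_i|$ is Gaussian, whereas you start from $Y$ and split it as $\epsilon_i Z_i$; the Jensen step and the resulting constant $(\pi/2)^{dk/2}$ are identical.
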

\begin{proof}[Proof (Thanks to Jelani Nelson)]
Let $\sigma=(\sigma_1,\ldots,\sigma_n)$ be an $n$-dimensional Bernoulli random variable and $G=(g_1,\ldots,g_n)$ an $n$-dimensional Gaussian random variable independent of of $\sigma$.  Note that $\sigma_i |g_i|$ is distributed as a Gaussian.  Therefore we have that
$$
\E[|p(G)|^k] = \E[|p(\sigma_1|g_1|,\ldots,\sigma_n|g_n|)|^k]  = \E_G[ \E_\sigma[|p(\sigma_1|g_1|,\ldots,\sigma_n|g_n|)|^k]].
$$
By the convexity of the $L^k$ norm this is at least
$$
\E_\sigma\left[\left|\E_G[p(\sigma_1|g_1|,\ldots,\sigma_n|g_n|)]\right|^k\right].
$$
On the other hand, we have that
$$
\E_G[p(\sigma_1|g_1|,\ldots,\sigma_n|g_n|)] = \sqrt{\frac{2}{\pi}}^d p(\sigma).
$$
Therefore we have that
$$
\E[|p(G)|^k] \geq \sqrt{\frac{2}{\pi}}^{dk} \E_\sigma[|p(\sigma)|^k].
$$
As desired.
\end{proof}

\section{Conclusion}\label{kIndepConc}

The bounds on $k$ presented in this paper are far from tight.  At the very least the argument in Lemma \ref{expectationErrorLem} could be strengthened by considering a larger range of cases of $|p(x)|$ rather than just whether or not it is larger than $\sqrt{M}$.  At very least, this would give us bounds on $k$ of the form $O_d(\epsilon^{-x^d})$ for some $x$ less than 7.  I suspect that the correct value of $k$ is actually $O(d^2 \epsilon^{-2})$, and in fact such large $k$ will actually be required for $p(x)=\prod_{i=1}^d(\sum_{j=1}^{k} x_{i,j})$.  On the other hand, this bound is at the moment somewhat beyond our means.  It would be nice at least to see if a bound of the form $k=O_d(\epsilon^{-\textrm{poly}(d)})$ can be proven.  The main contribution of this work is prove that there is some sufficient $k$ that depends on only $d$ and $\epsilon$.

\section*{Acknowledgment}

This work was done with the support of an NSF graduate fellowship.


\begin{thebibliography}{[99]}

\bibitem{BLY} Ido Ben-Eliezer, Shachar Lovett and Ariel Yadin \emph{Polynomial Threshold Functions: Structure, Approximation and Pseudorandomness}, Manuscript, available at \url{http://arxiv.org/PS_cache/arxiv/pdf/0911/0911.3473v3.pdf}.

\bibitem{anticoncentration} A. Carbery, and J. Wright \emph{Distributional and $L^q$ norm inequalities for polynomials over convex bodies in $\R^n$}
Mathematical Research Letters, Vol. 8(3) (2001), pp. 233–248.

\bibitem{DGJSV} I. Diakonikolas, P. Gopalan, R. Jaiswal, R. Servedio and E. Viola, \emph{Bounded Independence Fools Halfspaces} SIAM Journal on Computing, Vol. 39(8), 2010, pp. 3441-3462.

\bibitem{DKN} Ilias Diakonikolas, Daniel M. Kane, and Jelani Nelson, \emph{Bounded Independence Fools Degree-2 Threshold Functions}, Foundations of Computer Science (FOCS), 2010.

\bibitem{regularity} Ilias Diakonikolas, Rocco Servedio and Li-Yang Tan and Andrew Wan
  \emph{A Regularity Lemma, and Low-Weight Approximators, for Low-Degree Polynomial Threshold Functions} 25th Conference on Computational Complexity (CCC) (2010).

\bibitem{Moments} Rafal Latala \emph{Estimates of Moments of Tails of Gaussian Choases}, The Annals of Probability, Vol. 34 (2006), no. 6, pp. 2315-2331.

\bibitem{MOO} E. Mossel, R. O'Donnell, and K. Oleszkiewicz \emph{Noise stability of functions with low influences:
invariance and optimality} Proceedings of the 46th Symposium on Foundations of Computer Science
(FOCS), pages 21–30, 2005.

\bibitem{tails} De La Pena, V. H., and Montgomery-Smith, S. \emph{Bounds for the tail probabilities of $U$-statistics and quadratic forms} Bulletin of the American Mathematical Society, Vol. 31 (1994), pp. 223-227.

\bibitem{MZ} Raghu Meka and David Zuckerman \emph{Pseudorandom generators for polynomial threshold functions} in Proceedings of the 42nd ACM Symposium on Theory Of Computing (STOC), 2010.

\bibitem{kn:MOO} Elchanan Mossel, Ryan O'Donnell and Krzysztof Oleszkiewicz \emph{Noise stability of functions with low influences: invariance and optimality}
  Short version in Proceedings of 46th Annual IEEE Symposium on the Foundations of Computer Science (FOCS), full manuscript available at \url{http://www.cs.cmu.edu/~odonnell/papers/invariance.pdf}.

\end{thebibliography}
\end{document}